\newcommand{\chapter}{paper}
\newcommand{\comment}[1]{}
\newcommand{\ket}[1]{\left | #1 \right\rangle}
\newcommand{\bra}[1]{\left \langle #1 \right |}
\newcommand{\ketbra}[2]{\ket{#1}\!\!\bra{#2}}
\newcommand{\proj}[1]{\ketbra{#1}{#1}}
\newcommand{\bef}{\rightsquigarrow}
\newcommand{\nbef}{\not\rightsquigarrow}
\newcommand{\id}{\openone}
\newcommand{\ot}{\otimes}
\newcommand{\tr}{\mathrm{tr}}
\newcommand{\cA}{\mathcal{A}}
\newcommand{\cB}{\mathcal{B}}
\newcommand{\cH}{\mathcal{H}}
\newcommand{\cX}{\mathcal{X}}
\newcommand{\up}{\uparrow}
\newcommand{\dn}{\downarrow}
\theoremstyle{plain}
\newtheorem{theorem}{Theorem}
\newtheorem{lemma}{Lemma}
\newtheorem{claim}{Claim}
\theoremstyle{definition}
\newtheorem{definition}{Definition}
\begin{document}
\title{The completeness of quantum theory for predicting measurement
  outcomes}

\author{Roger \surname{Colbeck}}
\email[]{colbeck@phys.ethz.ch}
\affiliation{Institute for Theoretical Physics, ETH Zurich, 8093
Zurich, Switzerland.}
\author{Renato \surname{Renner}}
\email[]{renner@phys.ethz.ch}
\affiliation{Institute for Theoretical Physics, ETH Zurich, 8093
Zurich, Switzerland.}

\date{$11^{\text{th}}$ July 2013}

\begin{abstract}
  The predictions that quantum theory makes about the outcomes of
  measurements are generally probabilistic. This has raised the
  question whether quantum theory can be considered complete, or
  whether there could exist alternative theories that provide improved
  predictions. Here we review recent work that considers arbitrary
  alternative theories, constrained only by the requirement that they
  are compatible with a notion of ``free choice'' (defined with
  respect to a natural causal order). It is shown that quantum theory is
  ``maximally informative'', i.e., there is no other compatible theory
  that gives improved predictions. Furthermore, any alternative
  maximally informative theory is necessarily equivalent to quantum
  theory. This means that the state a system has in such a theory is
  in one-to-one correspondence with its quantum-mechanical state (the
  wave function). In this sense, quantum theory is complete.
\end{abstract}

\maketitle

\section{Introduction}

In this \chapter{} we look at the question of whether quantum theory is
optimal in terms of the predictions it makes about measurement
outcomes, or whether, instead, there could exist an alternative theory
with improved predictive power.  This was much debated in the early
days of quantum theory, when many eminent physicists supported the
view that quantum theory will eventually be replaced by a deeper
underlying theory.  Our aim will be to show that no alternative theory
can extend the predictive power of quantum theory, and hence that, in
this sense, quantum theory is complete.

Before turning to this question, it is worth reflecting on why one
might think that quantum theory may not be optimally predictive.  A
key factor is that the theory is probabilistic.  This is in stark
contrast with classical theory, which is deterministic at a
fundamental level.  Even in classical theory there are scenarios where
we may assign probabilities to various events, for example when making
a weather forecast.  However, this isn't in conflict with our belief
in underlying determinism, but, instead, the fact that we assign
probabilities simply reflects a lack of knowledge (about the precise
value of certain physical quantities) when making the prediction.  By
analogy, we might imagine that even if we know the quantum state of a
system before measurement (i.e., its wave function), we are also in a
position of incomplete knowledge, and that additional knowledge might
be provided in a higher theory.

A further argument for incompleteness was given by Einstein, Podolsky
and Rosen (EPR)~\cite{EPR}.  They argued that whenever the outcome of
an experiment can be predicted with certainty, there should be a
counterpart in the theory representing its value.  They then consider
measurements on a maximally entangled pair.  In this scenario, the
outcome of any measurement on one member of the pair can be perfectly
predicted given access to the other member.  Since the particles can
be far apart, a measurement on one shouldn't, say EPR, affect the
other in any way.  They hence argue that there should be parts of the
theory allowing these perfect predictions and, hence, that the quantum
description is incomplete.

Following EPR, one might hope that quantum theory can be explained in
terms of an underlying deterministic theory.  Such a view was put into
doubt by the Bell-Kochen-Specker theorem, independently discovered by
Kochen and Specker~\cite{KS} and by Bell~\cite{Bell_KS}, who showed
that an underlying deterministic theory is not possible if one demands
non-contextuality and freedom of choice.  (A non-contextual theory is
one in which the probability of a particular measurement outcome
occurring depends only on the projector associated with that outcome,
and not on the entire set of projectors that specify the measurement
according to quantum theory.)  Furthermore it was also shown by
Bell~\cite{Bell} that there cannot be an underlying theory that is
compatible with \emph{local causality} (we will explain this in more
detail in Section~\ref{sec_constraints}).  It is also worth noting
that an assumption about locality can be seen as a physical means of
justifying certain non-contextuality conditions.

In this \chapter{}, we consider arbitrary alternative theories and ask
whether they could have more predictive power than quantum theory.  We
remark that this question is different from those asked by Kochen and
Specker and by Bell, whose goal was to rule out theories with certain
specific properties such as non-contextuality or local causality. In
this work, we do not demand any of these properties. The only
assumption we make about a theory is that it is compatible with a
notion of free choice (defined with respect to a natural
causal order|see later).  Roughly, the freedom of choice assumption
demands that the theory can be applied to a setting where an
experimenter makes certain choices independently of certain
pre-existing parameters.  It is worth noting that quantum theory is
compatible with this assumption, as we would expect, since it is a
reasonable theory.  We also remark that such an assumption is
necessary for Kochen and Specker's as well as for Bell's arguments.

As a toy example of an alternative theory that enables improved
predictions over those of quantum theory, but which may still be
probabilistic, one might imagine that the quantum state is
supplemented by an additional parameter $Z$.  When measuring one half
of a maximally entangled pair of qubits, it could be that if $Z=0$ the
extended theory assigns outcome $0$ with probability $3/4$, and
outcome $1$ with probability $1/4$, while, if $Z=1$, the extended
theory assigns outcome $0$ with probability $1/4$, and outcome $1$
with probability $3/4$.  The extended theory would thus provide more
information than quantum theory, which predicts that both outcomes
occur with probability $1/2$.  Furthermore, if $Z$ is uniformly
distributed, the quantum predictions are recovered when $Z$ is unknown
(and hence the extended theory is compatible with quantum theory).

This particular example is rather artificial and its purpose is merely
to illustrate that|in principle|a theory that is more informative than
quantum theory is conceivable.  However, there are historical
precedents of this type, for instance related to the problem of
determining the mass of chemical elements. Take, as an example, the
atomic mass of chlorine.  Before the discovery of isotopes, its atomic
mass was thought to be $35.5$, and the standard measurement techniques
of the time confirmed it as such.  However, it was later discovered
that chlorine in fact naturally occurs as two isotopes with atomic
masses $35$ and $37$ (in approximate ratio $3:1$).  By introducing
isotopes, the theory was extended in such a way that the mass of an
individual atom could be better predicted.  Note that the predictions
made before the discovery of isotopes were not incorrect, but are
simply the natural ones to make without knowledge of the different
isotopes (and hence the new theory is compatible with the old one).

Returning to quantum theory, various alternatives, motivated more
physically than our earlier toy example, have been proposed in the
past, some of which we will review later (see
Section~\ref{sec_constraints}). Similarly to quantum theory, these
alternatives provide rules to compute predictions for future
measurement outcomes, based on certain (additional) parameters.

The aim of this \chapter{} is to explain recent results relating the
predictive power of quantum theory to that of possible alternative
theories~\cite{CR_ext,CR_wavefn}.  For this, we first need to specify
what we mean by ``quantum theory'' and by ``alternative theories'',
and how they can be compared (Section~\ref{sec_theories}). The central
requirement we impose on any alternative theory is that it be
compatible with a notion of ``free choice''. This means that the
theory can be applied consistently in scenarios where measurements are
chosen independently of certain other events
(Section~\ref{sec_freedom}).  We then discuss the implications of some
existing results to our main question.  These impose constraints on
any alternative theory that is compatible with quantum theory; for
instance, no such theory can be locally deterministic
(Section~\ref{sec_constraints}).  The last sections are then devoted
to the recent, more general, results.  A central claim is that no
alternative theory that is compatible with quantum theory can improve
the predictions of quantum theory (Sections~\ref{sec_extended}
and~\ref{sec_proof}). Furthermore, if such an alternative theory is
also at least as informative as quantum theory, then it is necessarily
equivalent to quantum theory (Section~\ref{sec_equivalence}). In this
sense, quantum theory is complete. We conclude with a discussion of
how these results relate to known hidden-variable theories, in
particular the de Broglie-Bohm theory, and mention some applications
(Section~\ref{sec_discussion}).

\section{Preliminaries}

\subsection{Notation}

On a technical level, the main results presented in this \chapter{} are
theorems about random variables (RVs) whose (joint) probability
distribution satisfies certain assumptions. We will only use RVs with
discrete range.  In the following we introduce our notation for such
RVs and their distributions.

We usually use upper case letters to denote RVs, while lower case
letters specify particular values they can take.  Thus, $X=x$ means
that the RV $X$ takes the value $x$.  We write $P_X$ to denote the
probability distribution of the RV $X$, with $P_X(x)$ being the
probability that $X=x$.  For two RVs, $X$ and $Y$, $P_{XY}$ represents
their joint distribution.  We also use $P_{X|Y}:=P_{XY}/P_Y$ to
represent the conditional distribution of $X$ given $Y$. This is
defined for all $y$ such that $P_Y(y) > 0$.  For any such $y$, we
write $P_{X|Y=y} := P_{X|Y}(\cdot, y)$ to denote the distribution of
the RV $X$ conditioned on $Y=y$. We often abbreviate this distribution
to $P_{X|y}$.  We also use $P(X=Y)$ to denote the probability that the
RVs $X$ and $Y$ have equal values, i.e.\ $P(X=Y):=\sum_xP_{XY}(x,x)$
and, likewise, $P(X\neq Y):=1-P(X=Y)$.

\subsection{Distance between probability distributions}

Our technical argument uses the \emph{variational distance} to
quantify the closeness of two probability distributions. For two
distributions, $P_X$ and $Q_X$, it is defined by
\[
  D(P_X,Q_X):=\frac{1}{2}\sum_x|P_X(x)-Q_X(x)|\, .
\]
This measure is connected to the distinguishability of the two
distributions.  Specifically, suppose we have a black box that samples
either from $P_X$ or $Q_X$. Then, given one sample, the maximum
probability of successfully guessing whether the sample has been
generated from $P_X$ or $Q_X$ equals $\frac{1}{2}(1+D(P_X,Q_X))$.
Thus, if two distributions are close in variational distance, they are
virtually indistinguishable.  Appendix~\ref{sec_dist} summarizes some
properties of $D(\cdot, \cdot)$ that are used in this work.

\subsection{Measuring correlations}\label{sec_CB}

A useful approach towards characterizing alternative theories is to
consider the correlations (between the outcomes of two distant
measurements) that can be reproduced by a given theory. The strength
of these correlations may then, for instance, be compared to those
occurring in quantum theory. To quantify correlations, we use a
measure that has been proposed by Pearle~\cite{Pearle} and,
independently, by Braunstein and Caves~\cite{BC}, based on earlier
work by Clauser, Horne, Shimony, and Holt~\cite{CHSH}.

The correlation measure is tailored to a specific bipartite setup
where measurements are carried out at two separate locations. One of
the measurements is specified by a parameter $A$ and has outcome
$X$. The other is specified by a parameter $B$ and has outcome $Y$. It
is furthermore assumed that the outcomes $X$ and $Y$ take values from
the binary set $\{0,1\}$ and that the parameters $A$ and $B$ are
labelled by elements from the sets $\{0,2,\ldots, 2N-2\}=:\cA_N$ and
$\{1,3,\ldots,2N-1\}=:\cB_N$, respectively, where $N$ is an
integer. The correlation measure, in the following denoted by $I_N$,
is then defined by
\begin{eqnarray*}
  I_N(P_{XY|AB}):=P(X=Y|A=0,B=2N-1)+\\
  \sum_{\genfrac{}{}{0pt}{}{a \in \cA_N, \, b \in \cB_N}{|a-b|=1}}\!\!\!\! P(X\neq Y|A=a,B=b)\, ,
\end{eqnarray*}
and depicted in Figure~\ref{fig:frust}.  Note that the measure only
depends on the conditional distribution $P_{XY|AB}$, and that stronger
correlations have a lower value of $I_N$.

\begin{figure}
\includegraphics[width=0.3\textwidth]{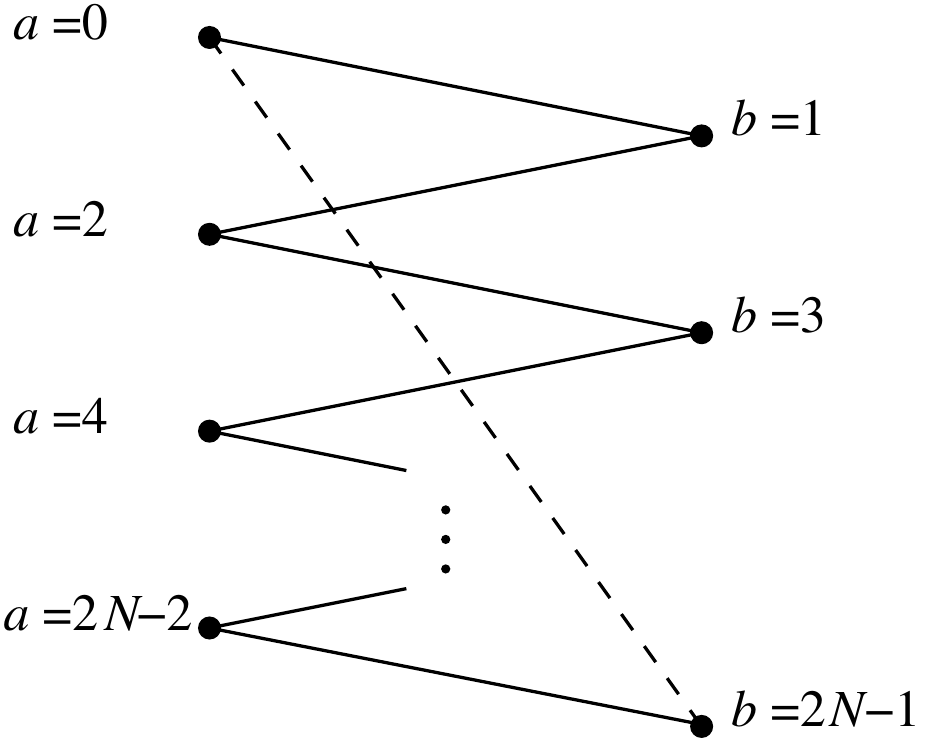}
\caption{{\sf \textbf{Illustration of the terms in the correlation
      measure $I_N$.}  This measure is defined as the sum of the
    probabilities of obtaining opposite outcomes when measuring two
    subsystems in neighbouring bases (depicted with the solid lines),
    and of obtaining the same outcomes for $a=0$, $b=2N-1$ (depicted
    with the dashed line).}}
\label{fig:frust}
\end{figure}

We will be particularly interested in the correlations that quantum
theory predicts for measurements on two maximally entangled two-level
systems. To specify these correlations, define
\[
\ket{\psi_2}:=\frac{1}{\sqrt{2}}(\ket{\up\up}+\ket{\dn\dn}) \ , 
\]
where $\{\ket{\up}, \ket{\dn}\}$ is an orthonormal basis. Furthermore,
let
$\ket{\theta}=\cos\frac{\theta}{2}\ket{\up}+\sin\frac{\theta}{2}\ket{\dn}$,
and take $E_x^a$ to be the projector onto $\ket{(\frac{a}{2N}+x)\pi}$
and, likewise, $F_y^b$ to be the projector onto
$\ket{(\frac{b}{2N}+y)\pi}$, as shown in Figure~\ref{fig:meas}.  We
then define $P^N_{XY|AB\psi_2}$ as the conditional distribution of
the outcomes of two separate quantum measurements, specified by
$\{E_x^a\}_x$ and $\{F_y^b\}_y$, respectively, applied to two separate
subsystems with joint state $\ket{\psi_2}$, i.e.,
\begin{align*}
  P^N_{XY|ab\psi_2}(x,y) := \bra{\psi_2} E_x^a \otimes F_y^b
  \ket{\psi_2} \ .
\end{align*}
It is easy to verify that the correlation strength, quantified with
the above correlation measure, $I_N$, equals
\begin{align} \label{eq_quantcorr}
  I_N(P^N_{XY|AB\psi_2})=2N\sin^2\frac{\pi}{4N} \leq \frac{\pi^2}{8N}\ .
\end{align}

\begin{figure}
\includegraphics[width=0.3\textwidth]{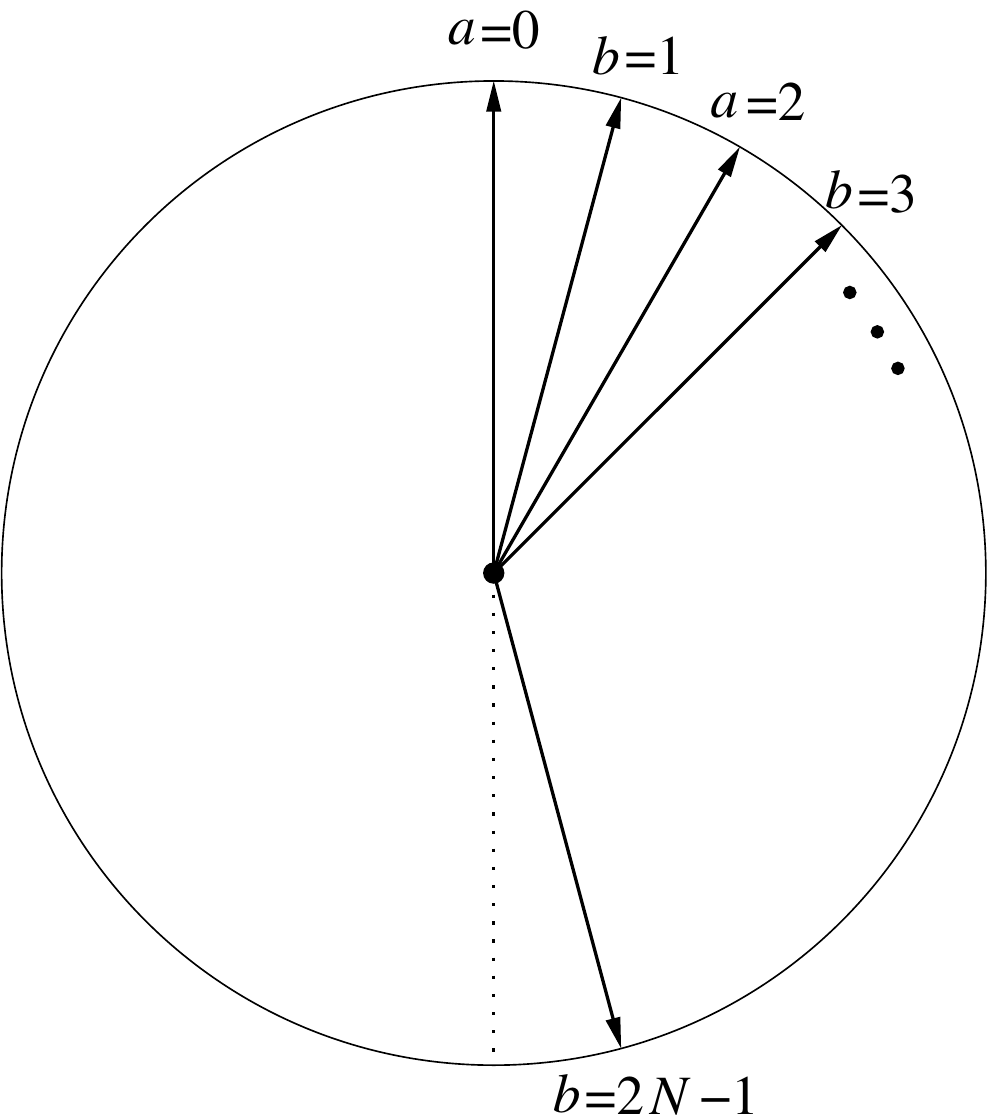}
\caption{{\sf \textbf{Depiction of the measurements used to achieve
      the quantum value of the correlation measure $I_N$.}  The circle
    represents the
    $\{\ket{\up},\frac{1}{\sqrt{2}}(\ket{\up}+\ket{\dn})\}$ plane of
    the Bloch sphere.  The arrows depict the Bloch vectors associated
    with the $0$ outcome (i.e.\ $E_0^a$ or $F_0^b$ are the projectors
    onto these states).  Those for the $1$ outcome lie in the opposite
    direction and are not depicted.  In the limit of large $N$, the
    measurements for neighbouring bases ($|a-b|=1$) are virtually
    identical and the outcomes are almost always perfectly correlated.
    Conversely, for $a=0$, $b=2N-1$ and large $N$, the measurements
    are virtually opposite of one another and the outcomes are almost
    always perfectly anti-correlated.}}
\label{fig:meas}
\end{figure}

\section{Quantum and alternative theories} \label{sec_theories}

The aim of this \chapter{} is to make statements about physical theories,
i.e., quantum theory as well as possible alternatives to it. However,
in order to derive our result, we do not need to provide a
comprehensive mathematical definition for the concept of a ``physical
theory''. Rather, it suffices to focus on one crucial feature that we
expect any theory to have, namely that it allows us to compute
predictions about values that can be observed (e.g., in an
experiment). These predictions, which need not be deterministic, are
generally based on certain parameters that characterize the
(experimental) setup, i.e., how it has been prepared (its initial
state), the evolution it undergoes, and which measurements are going
to be applied.

\subsection{Predictions of quantum theory} \label{sec_quantum}

In quantum theory, given the state, $\Psi$, of a system as well as a
specification of the measurement process, $A$, a prediction about an
experimentally observable value, $X$, can be obtained from Born's
rule. The state $\Psi$ may be given in the form of a density operator
on a Hilbert space $\cH$ and any measurement process $A=a$ can be
characterized by a \emph{Positive Operator Valued Measure (POVM)} on
$\cH$, i.e., a family of positive operators $\{E_x^a\}_x$ labelled by
the possible measurement outcomes $x \in \cX$ such that $\sum_x E_x^a
= \id_{\cH}$. (In this work, we assume for simplicity that the set
$\cX$ is finite.)

For our treatment, we will assume that any evolution of the system
prior to the measurement $\{E_x^a\}_x$ is already accounted for by its
quantum state, i.e, that $\Psi = \psi$ is the state of the system
directly before the measurement is applied.\footnote{Alternatively,
  one may work in the Heisenberg picture, for instance, and use the
  POVM to account for the evolution.} The predictions that quantum
theory makes about the measurement outcome $X$ can then be represented
as a conditional distribution $P_{X|A\Psi}$, which is given by
\begin{align} \label{eq_QMprediction} P_{X|a\psi}(x) = \tr(E^a_x \psi)
  \quad \forall\ x \in \cX \, .
\end{align}
We note that, by considering an extension of the Hilbert space $\cH$,
we may describe any quantum-mechanical measurement process
equivalently as a \emph{projective measurement}, i.e., one for which
the POVM $\{E^a_x\}_x$ consists of orthogonal
projectors.\footnote{According to Naimark's theorem, there exists a
  Hilbert space $\mathcal{\bar{H}}$ that contains $\cH$ as a subspace
  as well as orthogonal projectors $P^a_x$ in $\mathcal{\bar{H}}$ such
  that for each $x \in \cX$ the POVM element $E^a_x$ is the projection
  of $P^a_x$ into $\cH$.}  Furthermore, we call a set of POVMs
$\{E^a_x\}$ on $\cH$ \emph{tomographically complete} if the values
$P_{X|a\psi}(x)$ for all $a$ and $x$ are sufficient to determine
$\psi$ on $\cH$ uniquely.\footnote{An example of a tomographically
  complete set of projective POVMs in the case of a single qubit are
  the three POVMs whose elements are projectors onto (i) $\ket{\up}$
  and $\ket{\dn}$, (ii) $(\ket{\up}+\ket{\dn})/\sqrt{2}$ and
  $(\ket{\up}-\ket{\dn})/\sqrt{2}$, and (iii)
  $(\ket{\up}+i\ket{\dn})/\sqrt{2}$ and
  $(\ket{\up}-i\ket{\dn})/\sqrt{2}$.}

For later reference, we also note that, according to quantum theory,
any possible evolution of a quantum system, $S$, corresponds to a
unitary mapping on a larger state space (that may include the
environment of the system).  In the case of a measurement process,
this larger state space includes the measurement device,
$D$. Specifically, a projective measurement, say $\{E^a_x\}_x$, would
correspond to a unitary of the form
$$\ket{\psi} \mapsto \sum_x\sqrt{E_x^a} \ket{\psi}_S \ot \ket{x}_D ,$$
where $\{\ket{x}_D\}$ are orthonormal states of the measurement device
(and possibly also its environment) that encode the outcome.  The
outcome $X$ of the original measurement may then be recovered by a
subsequent projective measurement on $D$ in the basis $\{\ket{x}_D\}$.

\subsection{Predictions of alternative theories}

In an alternative theory, the measurement process $A$ with outcome
$X$, as described above in terms of the quantum formalism, may admit a
different description. This description could involve other
parameters, which we denote by $Z$ (one might think of $Z$ as the list
of all parameters used by the theory to describe the system's state
before the measurement $A$ is chosen).\footnote{In~\cite{CR_ext}, $Z$
  was modelled more generally as a system with input and output.  For
  simplicity, we ignore this higher level of generality in this work.}
For any values $A=a$ and $Z=z$ of these parameters, the theory
specifies a rule for computing the probability distribution,
$P_{X|az}$, for the measurement outcome $X$.  Hence, in the following,
if we want to make a statement about the predictive power of a given
theory,\footnote{When referring to the predictive power of a theory,
  we mean predictions based on the value $Z$.} it is sufficient to
consider the properties of the corresponding distributions $P_{X|az}$.

Since we want to use theories to make predictions, we usually think of
$Z$ as (in principle) learnable.  However, this is merely an
interpretive statement, and none of the conclusions of this work are
affected if $Z$ is instead thought of as forever hidden and hence
unlearnable in principle.  The only thing that changes in the latter
case is the interpretation of other statements.  In particular, one
may not want to call the condition $P_{XZ|AB}=P_{XZ|A}$, derived in
Section~\ref{sec:NS}, a ``no-signalling'' condition, or to speak about
``predictions'' made based on $Z$ if $Z$ is not learnable in
principle.

\subsection{Compatibility of predictions}

The predictions computed within two different theories (e.g., quantum
theory and an alternative theory) are generally not
identical. Nevertheless, they may be \emph{compatible} with each
other, in the following sense. Let $Z$ and $Z'$ be the parameters of
two different theories, and let their predictions (about the outcome
$X$ of a measurement $A$) be given by conditional probability
distributions $P_{X|AZ}$ and $P_{X|AZ'}$, respectively.\footnote{Note
  that the conditional probability distribution $P_{X|AZ}$ (and,
  similarly, $P_{X|AZ'}$) may be defined only for a restricted set of
  pairs $(a,z)$.}

\begin{definition} \label{def_compatible} $P_{X|AZ}$ and $P_{X|AZ'}$
  are said to be \emph{compatible} if there exists a conditional
  distribution $\bar{P}_{X Z Z'|A}$ such that\footnote{We require that
    both sides of the equalities are defined for the same pairs
    $(a,z)$ and $(a,z')$.}
  \begin{align*}
    P_{X|az} & = \sum_{z'} \bar{P}_{X Z'|a z}(\cdot, z') \ \
    \text{$\forall\ a,z$}\\
      P_{X|a z'} & = \sum_{z} \bar{P}_{X Z|a z'}(\cdot, z) \ \
      \text{$\forall\ a, z'$,}
  \end{align*}
  where the conditional distributions in the sums are derived from
  $\bar{P}_{X Z Z'|a}$.\footnote{That is, $\bar{P}_{X Z'|az}$ is given
    by
  \begin{align*}
    \bar{P}_{X Z'|az}(x, z') & = \bar{P}_{X Z Z' | a}(x, z,
    z')/\bar{P}_{Z|a}(z)  \quad \text{(if $\bar{P}_{Z|a}(z) > 0$)}
    \end{align*}
    where $\bar{P}_{Z|a}(z) = \sum_{x, z'} \bar{P}_{X Z Z'|a}(x, z,
    z')$, and likewise for $\bar{P}_{X Z|a z'}$.}
\end{definition}

To relate the definition back to the earlier example of the isotopes,
by way of illustration, the chemical element could be specified by
$Z$, and the particular isotope by $Z'$.  The relevant predictions are
then compatible in the above sense: since $Z'$ is a fine-graining of
$Z$ (i.e., $Z$ is uniquely determined by $Z'$), the second relation is
trivial, while the first recovers the non-isotopic predictions by
averaging over the different isotopes.

We will use this notion of compatibility to compare quantum theory to
alternative theories. For this, we let $Z' \equiv \Psi$ be the quantum
state of a system and consider the conditional distribution $P_{X|A
  \Psi}$ defined by~\eqref{eq_QMprediction}. An alternative theory
with predictions specified by $P_{X|A Z}$ (based on a parameter $Z$)
can then be considered compatible with quantum theory if there exists
a distribution $\bar{P}_{X Z \Psi|A}$ such that both $P_{X|A \Psi}$
and $P_{X|A Z}$ can be recovered from it (in the sense of the above
definition).

\subsection{Comparing the accuracy of predictions}

The predictive powers of different theories can be compared provided
the theories are mutually compatible.  The idea is that a theory with
predictions $P_{X|AZ}$ is \emph{at least as informative} as another
theory with predictions $P_{X|A Z'}$ if the latter can be obtained
from the former, i.e., if the parameter $Z'$ does not provide any
information beyond $Z$. This motivates the following definition.

\begin{definition}
  Let $P_{X|AZ}$ and $P_{X|AZ'}$ be compatible.  $P_{X|AZ}$ is said to
  be \emph{(at least) as informative as} $P_{X|AZ'}$ if there exists a
  conditional distribution $\bar{P}_{X Z Z'|A}$ as in
  Definition~\ref{def_compatible} such that
  \begin{align*}
    P_{X|az} =  \bar{P}_{X|azz'} \ \  \text{$\forall\ a,z,z'$
      s.t.\ $\bar{P}_{ZZ'|a}(z, z') > 0$ ,}
  \end{align*}
  where $\bar{P}_{X|azz'}$ and $\bar{P}_{ZZ'|a}$ are the conditional
  distributions derived from $\bar{P}_{XZZ'|A}$. 
\end{definition}

This can again be illustrated using the earlier example of the
isotopes. The theory that includes the information $Z'$ about the
particular isotope is of course at least as informative as the one
that only specifies the chemical element $Z$, but $Z$ is not as
informative as $Z'$.

We remark that quantum-mechanical predictions based on pure states are
generally more informative than those derived from mixed states. To
see this, imagine a system that is prepared in a pure state $\psi_C$
depending on a random bit $C$, and assume that a measurement with
outcome $X$ is performed. If $C$ is unknown, with $C=0$ and $C=1$
being equally likely, the distribution of $X$ is, according to quantum
theory, given by~\eqref{eq_QMprediction} with $\psi$ substituted by
the mixed state $\frac{1}{2} \psi_0 + \frac{1}{2} \psi_1$. However, if
we had access to $C$, we could use~\eqref{eq_QMprediction} with $\psi$
replaced by $\psi_C$, resulting in a more accurate prediction.

Clearly, when studying the question of whether there can be more
informative theories than quantum theory, we need to consider
specifications of states and measurement processes that are maximally
informative among all predictions that are possible within quantum
theory. Hence, following the above remark, we will restrict our
attention to quantum states that correspond to pure density operators
and to projective measurements.

\section{Freedom of choice} \label{sec_freedom}

As explained above, physical theories involve certain parameters, and
it is generally assumed (often implicitly) that these can be chosen
freely. Quantum mechanics, for instance, allows us to compute the
probabilities of a measurement outcome $X$ depending on the system's
state $\Psi$ as well as a description of the measurement process, $A$,
and our understanding is that these parameters can in principle be
chosen freely (e.g., by an experimenter carrying out a measurement of
her choice).  In fact, one may argue that a description of nature that
does not involve any such choices|thereby not allowing us to compute
conclusions for different initial conditions|cannot be reasonably
termed a theory~\cite{Bell_free}.

It is worth noting that by assuming free choice, we are not making any
metaphysical assertion that the real world contains, say, agents with
free will, or anything of that sort.  Instead, allowing free choice is
a property that we require of a theory.  In essence, it means that the
theory gives predictions for all possible values of the free
parameters, and furthermore, that it does so no matter what happened
elsewhere in the theory.  Without such an assumption, depending on
other events described by the theory, certain values of the `free'
parameters could be unavailable, in the sense that the theory would
not be able to predict a response to them.

In this section, we specify what we mean by such free choices. The
idea is that, for a given theory, the statement that a parameter of
the theory, say $A$, is considered \emph{free} is equivalent to saying
that $A$ is uncorrelated with all values (described by the theory)
that are outside the future of $A$.  For this definition to make sense
mathematically, we need to establish a notion of \emph{future}. We do
this by introducing a \emph{causal order}, i.e., a (partial) ordering
of events.  We stress, however, that the causal order is only used to
define free choice and plays no further part in the argument.  \footnote{In
particular, we do not assume~\emph{local causality} within the
specified causal order.}

\subsection{Causal order}

Let $\Gamma$ be the set of all parameters required for the description
of an experiment within a given theory.  In particular, $\Gamma$ may
contain variables that specify the (joint) state in which the relevant
physical systems have been prepared (in the following usually denoted
by $\Psi$ for quantum theory and by $Z$ for more general theories),
the choice of measurements (denoted $A$ and $B$), as well as the
measurement outcomes (denoted $X$ and $Y$).  For any such set of
variables $\Gamma$, we can define a causal order $\bef$ as follows.

\begin{definition}
  A \emph{causal order} $\bef$ for $\Gamma$ is a preorder
  relation\footnote{That is, $\bef$ is a binary relation on the set
    $\Gamma$ that is reflexive (i.e., $A \bef A$) and transitive
    (i.e., $Z \bef A$ and $A \bef X$ imply $Z \bef X$).} on $\Gamma$.
\end{definition}
If $A\bef X$, we say that \emph{$X$ is in the (causal) future of $A$},
and if this doesn't hold, we write $A\nbef X$.  These relations can be
conveniently specified by a diagram (see Figure~\ref{fig:cause} for an
example).  Note that the causal order~$\bef$ should not be interpreted
as specifying actual causal dependencies\footnote{I.e., $A\bef X$ is
  not meant to imply that there is necessarily a physical process such
  that changing $A$ imposes a change of $X$.}, but instead indicates
that such causal dependencies are not precluded (by the theory).




A typical|but for the following considerations not
necessary|requirement on a causal order is that it be compatible with
relativistic space time. Consider, for example, an experiment where a
parameter $A$ is chosen at a given space time point $\textbf{r}_A$ and
where a measurement outcome $X$ is observed at another space time
point $\textbf{r}_X$. One would then naturally demand that $A\bef X$
if and only if $\textbf{r}_X$ lies in the future light cone of
$\textbf{r}_A$. This captures the idea that the choice $A$ is made at
an earlier time than the observation of $X$, with respect to any
reference frame.

\subsection{Free random variables} \label{sec_free}

To define the notion of a ``free choice'', we consider a set $\Gamma$
of RVs equipped with a causal order. (As above, $\Gamma$ should be
thought of as the set of all parameters relevant for the description
of an experiment within a given theory.)

\begin{definition}\label{def:free}
  We say that $A \in \Gamma$ is \emph{free} if
  \begin{align*}
    P_{A\Gamma_A} = P_A \times P_{\Gamma_A}
  \end{align*}
  holds, where $\Gamma_A$ is the set of all RVs $X \in \Gamma$ such
  that $A\nbef X$.\footnote{By definition, the set $\Gamma_A$ also
    excludes $A$.}
\end{definition}

Obviously, whether a variable from the set $\Gamma$ is considered free
depends on the causal order that we impose.  If the causal order is taken
to be the one induced by relativistic space time (see the description
above), then this definition coincides with the notion of a \emph{free
  variable} as used by
Bell~\cite{Bell_free}.\footnote{In~\cite{Bell_free}, Bell discusses
  the assumption that the settings of instruments are \emph{free
    variables}, which he characterizes as follows: ``For me this means
  that the values of such variables have implications only in their
  future light cones.''}  We remark that both standard quantum theory
and classical theory in relativistic space time allow for free choices
within such a causal order.

\begin{figure}
\includegraphics[width=0.25\textwidth]{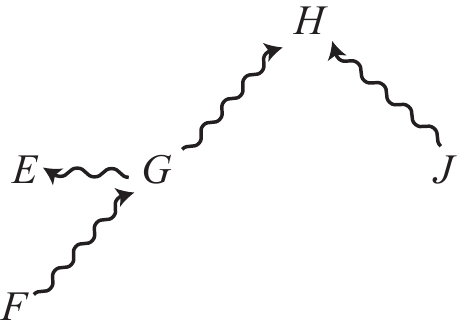}
\caption{{\sf \textbf{Free choice and causal order.} An arbitrary
    causal order is depicted for random variables $E$, $F$, $G$, $H$ and
    $J$. The arrows correspond to the relation $\bef$. For
    example, $G$ lies in the future of $F$, i.e., $F \bef G$, but not
    of $J$, i.e., $J\nbef G$.  Because of the transitivity property,
    it follows that $F\bef E$, for example.  In this setting we would
    say that, for instance, $G$ is free if it is uncorrelated with $F$
    and $J$, i.e., $P_{GFJ}=P_G\times P_{FJ}$.}}
\label{fig:cause}
\end{figure}

\section{Constraints on theories compatible with quantum
  theory} \label{sec_constraints}

We discuss here the implication of some well-known results to our main
question, whether an extension of quantum theory can have improved
predictive power.  Although they were not asking the same question,
the works of Bell~\cite{Bell} and Leggett~\cite{Leggett} can be
adapted to give constraints on such higher theories, and hence give
special cases of the general theorem presented in
Section~\ref{sec_extended}, which excludes all alternative theories
whose predictions are more informative than quantum theory.


\subsection{Bipartite setup} \label{sec_bipartite}

The statements described below refer to a bipartite setup which
involves two separate measurements, specified by parameters $A$ and
$B$, and with outcomes $X$ and $Y$, respectively.  As before, we
consider a theory that allows us to compute predictions about these
measurements, based on a parameter (or list of parameters)
$Z$. Furthermore, in order to define free choices, we need to specify
a causal order.  The technical claims described in this section can be
applied to any causal order that satisfies the following conditions:
\begin{itemize}
\item [(i)] $A \bef X$ and $B \bef Y$;
\item [(ii)] $A\nbef Z$ and $B\nbef Z$;
\item [(iii)] $A\nbef Y$ and $B\nbef X$.
\end{itemize}
Condition~(i) corresponds to the requirement that the measurement is
specified before its outcome is obtained.  Condition~(ii) captures the
fact that the parameters of the theory, $Z$, on which the predictions
are based, should not only become available after the measurement
process is started. This assumption can be considered necessary in
order to reasonably talk about ``predictions''.  Finally,
Condition~(iii) demands that the arrangement of the two measurements
should be such that neither of them lies in the future of the
other. (Note that, assuming a relativistic space time structure, this
would correspond to a setup where the measurements are space-like
separated.)  Together, the three conditions imply a causal order in
which $A$ is considered free if $P_{ABYZ}=P_A\times P_{BYZ}$, and
likewise for~$B$.  The causal orders respecting (i)--(iii) are
illustrated in Figure~\ref{fig:chronology}.

\begin{figure}
\includegraphics[width=0.2\textwidth]{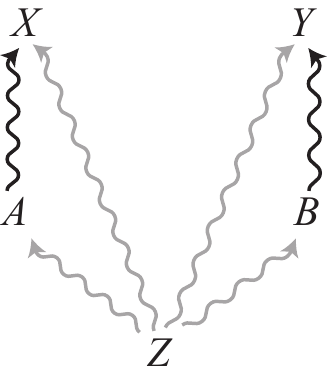}
\caption{{\sf \textbf{The causal orders for which our argument
      applies.}  We consider a setup with two separate measurements,
    one depending on a choice $A$ with outcome $X$, and the other with
    choice $B$ and outcome $Y$.  Moreover, $Z$ denotes all extra
    parameters that may be used to make predictions about the
    outcomes. The figure illustrates all of the causal orders
    compatible with our requirements~(i)--(iii).  The black arrows
    originating from $A$ and $B$ are required, while each of the grey
    arrows originating from $Z$ is
    optional.
  }}
\label{fig:chronology}
\end{figure}

\subsection{Local deterministic theories} \label{sec_deterministic}

Local deterministic theories were introduced in the work of
Bell~\cite{Bell}, and,  
within the bipartite setup described above, are ones for which all
conditional probabilities $P_{X|az}(x)$ and $P_{Y|bz}(y)$ are equal to
either $0$ or $1$.


It follows using essentially the same argument used to prove Bell's
theorem that no locally deterministic theory can reproduce the
predictions of quantum theory.  To show this, it is sufficient to
consider the correlations $P^N_{XY|AB\psi_2}$ that quantum theory
predicts for the measurements on the maximally entangled state
$\ket{\psi_2}$ defined in Section~\ref{sec_CB}.  We state this as the
following theorem.

\begin{theorem}[No higher theories are locally
  deterministic] \label{thm_Bell} Let $A$, $B$, $X$, $Y$ and $Z$ be
  RVs. Then at least one of the following cannot hold:
  \begin{itemize}
  \item \emph{Freedom of choice:\footnote{The freedom of choice
        assumption is often not mentioned explicitly, but its
        necessity has been stressed by Bell in later
        work~\cite{Bell_free}.}} $A$ and $B$ are free with respect to
    any of the causal orders depicted in
    Figure~\ref{fig:chronology};
 \item \emph{Compatibility with quantum theory:} $P_{X Y | A B Z}$ is
   compatible with the predictions $P^N_{XY|AB\psi_2}$ of quantum
    theory (for some $N \geq 2$);
  \item \emph{Local determinism:} 
    \begin{align*}
      P_{X|a z}(x) & \in \{0,1\} \quad \text{$\forall\ a, z $ s.t.\
        $P_{AZ}(a,z) > 0$} \\
      P_{Y|b z}(y) & \in \{0,1\} \quad \text{$\forall \ b, z $ s.t.\
        $P_{BZ}(b,z) > 0$ .}
   \end{align*}
\end{itemize}
\end{theorem}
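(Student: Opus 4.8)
The plan is to assume all three properties hold at once and derive a contradiction through the correlation measure $I_N$ of Section~\ref{sec_CB}. By~\eqref{eq_quantcorr} the quantum predictions satisfy $I_N(P^N_{XY|AB\psi_2}) = 2N\sin^2\frac{\pi}{4N} \leq \frac{\pi^2}{8N} < 1$ for every $N \geq 2$, so it suffices to show that freedom of choice together with local determinism force $I_N \geq 1$ on any theory whose $Z$-averaged predictions reproduce these quantum correlations. The contradiction then rules out the simultaneous validity of the three hypotheses.

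First I would extract the consequence of freedom of choice that the argument actually needs. Since $A$ is free, $P_{ABYZ} = P_A \times P_{BYZ}$, so $A$ is independent of $(B,Z)$; since $B$ is free it is independent of $(A,Z)$. Combining these gives $P_{ABZ} = P_A \times P_B \times P_Z$, and in particular $P_{Z|ab} = P_Z$ whenever $P_{AB}(a,b) > 0$. Because the quantum state $\Psi = \psi_2$ is fixed and hence carries no information, conditioning on it in the joint distribution $\bar P_{XYZ\Psi|AB}$ supplied by compatibility leaves these independences intact. Compatibility then means the quantum distribution is recovered by averaging the alternative theory's predictions over $Z$, and since each term of $I_N$ is a probability that is linear in $P_{XY|ab}$, the averaging passes through the measure:
\[
  I_N(P^N_{XY|AB\psi_2}) = \sum_z P_Z(z)\, I_N(P_{XY|ABz}) \, ,
\]
where $I_N(P_{XY|ABz})$ is the measure evaluated on the distribution conditioned on $Z=z$.

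The core of the argument, and the step I expect to be the \emph{main obstacle}, is the combinatorial claim that every deterministic term satisfies $I_N(P_{XY|ABz}) \geq 1$. Fixing $z$ with $P_Z(z)>0$, local determinism fixes definite outcomes $x_a \in \{0,1\}$ for each $a \in \cA_N$ and $y_b \in \{0,1\}$ for each $b \in \cB_N$. I would arrange the $2N$ settings $0,1,\ldots,2N-1$ around a cycle and write $v_i$ for the determined outcome of setting $i$, so that even $i$ give $x$-values and odd $i$ give $y$-values. The neighbouring terms $P(X\neq Y\mid a,b)$ with $|a-b|=1$ become the indicators $[v_i \neq v_{i+1}]$ for $i=0,\ldots,2N-2$, while the wrap-around term $P(X=Y\mid 0,2N-1)$ becomes $[v_0 = v_{2N-1}]$. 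A parity argument then closes the case: if $v_0 = v_{2N-1}$ the wrap-around term already contributes $1$; if $v_0 \neq v_{2N-1}$ then the number of sign changes along the open chain $v_0,\ldots,v_{2N-1}$ is odd, so at least one neighbouring indicator equals $1$. Either way $I_N(P_{XY|ABz}) \geq 1$.

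Summing this bound against $P_Z$ and using $\sum_z P_Z(z)=1$ yields $I_N(P^N_{XY|AB\psi_2}) \geq 1$, contradicting the quantum value $<1$ established at the outset; hence the three conditions cannot all hold. The only genuinely delicate bookkeeping is arranging that freedom of choice and compatibility combine so the deterministic cycle bound applies term by term under the $Z$-average. The cycle-and-parity lemma is the essential content, being exactly the structure underlying Bell's theorem recast in terms of $I_N$, and I would present it as the pivotal step.
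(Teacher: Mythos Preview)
Your proposal is correct and follows essentially the same route as the paper: show that freedom of choice and local determinism force $I_N\ge 1$ on the $Z$-conditioned distributions, average over $Z$ using compatibility, and contradict the quantum value $I_N(P^N_{XY|AB\psi_2})<1$. The paper's sketch specializes to $N=2$ (where $I_2\ge 1$ is the CHSH inequality and the quantum value is $2-\sqrt{2}$), whereas you carry out the general-$N$ chained version with the explicit cycle-parity argument; otherwise the structure is identical.
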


To prove this theorem, we use the correlation measure $I_N$ defined in
Section~\ref{sec_CB}. The central idea is to show that, under the free
choice assumption, all correlations explained by a locally
deterministic model satisfy the inequality $I_2 \geq 1$, which
corresponds to the CHSH inequality~\cite{CHSH}. (The free choice
assumption ensures that $P_{AB|z}$ has full support for each $z$, and
hence that the conditional distributions $P_{X|az}$ and $P_{Y|bz}$
are well defined for any $a$, $b$, and $z$.) The assertion then
follows from the fact that $I_N(P^N_{XY|AB\psi_2}) = 2-\sqrt{2} < 1$
for $N=2$ (see~Eq.~\ref{eq_quantcorr}).

\subsection{Stochastic local causal theories}

In his later work, Bell dropped the assumption of determinism and
considered more general stochastic models.  He adopted the following
definition of locality called \emph{local causality}, which leads to
the relation $P_{XY|ABZ}=P_{X|AZ}P_{Y|BZ}$~\cite{Bell_nouvelle}.
Expanding the left hand side using Bayes' rule, this can be broken
down into four separate relations, $P_{X|ABZ}=P_{X|AZ}$,
$P_{Y|ABZ}=P_{Y|BZ}$, $P_{X|ABYZ}=P_{X|ABZ}$ and
$P_{Y|ABXZ}=P_{Y|ABZ}$.  The first two of these have sometimes been
termed \emph{parameter independence} and imply that, even given access
to $Z$, there cannot be signalling between the two measurement
processes.

The last two conditions have been termed \emph{outcome
  independence}. They do not have an obvious operational significance
(such as no-signalling), and do not in general hold for the theories
we consider in this work. We note, however, that they are
automatically satisfied in any deterministic model, where each of the
outcomes $X$ and $Y$ is a function of $A$, $B$, and $Z$.  Conversely,
as we argue below, if a theory is locally causal then the predictions
it makes about the outcomes of measurements on the entangled state
$\ket{\psi_2}=\frac{1}{\sqrt{2}}(\ket{\up\up}+\ket{\dn\dn})$ are
necessarily deterministic.  (This is the essence of the EPR
argument~\cite{EPR}.)

To see this, note that for any projective measurement (specified by
$A=a$) applied to the first part of $\ket{\psi_2}$, there exists
another projective measurement (specified by $B = b_a$) on the second
part such that the outcomes are perfectly correlated. For example, if
$A=a$ corresponds to the POVM $\{\ketbra{\up}{\up},
\ketbra{\dn}{\dn}\}$, and if we choose $B=b_a$ such that it
corresponds to the same POVM, then $P_{XY|a
  b_a}(0,0)=P_{XY|ab_a}(1,1)=\frac{1}{2}$. This means that $X$ is
determined by $Y$, i.e., $P_{X|ab_ayz}(x)=\delta_{x,y} \in \{0,1\}$
for all $a$, $x$, $y$ and $z$. Applying now the conditions of local
causality, we obtain $P_{X|abyz}(x) = P_{X|az}(x) \in \{0,1\}$, which
corresponds to the assumption of local determinism. Hence, there is an
analogue of Theorem~\ref{thm_Bell} in which the local determinism
condition is weakened to Bell's local causality condition.

We remark that, as we shall see below (Lemma~\ref{lem:1}), the freedom
of choice assumption implies parameter independence, but is not strong
enough to imply local causality, since it doesn't imply outcome
independence.

\subsection{Leggett-type theories} \label{sec_Leggett}

In~\cite{Leggett}, Leggett introduced what he calls a ``non-local
hidden variable'' model, which attempts to give an explanation of
quantum correlations that is partly local and partly non-local.  The
presence of non-local hidden variables in his model leads to an
incompatibility with the free choice assumption, and hence Leggett's
model is not a higher theory in the sense of the present work.
However, since the behaviour of the non-local variables is not
specified in Leggett's model, we can consider a slightly modified
version in which they are ignored (henceforth, when we speak about
Leggett's model, we refer to the local part of it).  The model is then
compatible with our notion of free choice, and offers improved
predictive power for measurements on maximally entangled particles. We
note that the model is not a full-fledged theory, as it only specifies
how the outcomes of spin measurements are obtained.

Leggett's model is based on the idea of assigning to each spin
particle a three-dimensional vector (in addition to its quantum
mechanical state). In particular, if we consider two spin particles,
each measured on one side within the bipartite setup described above,
we need to specify two such vectors, denoted ${\bf u}$ and ${\bf v}$,
respectively.  To connect this to our general discussion, we may think
of these vectors as part of $Z$, i.e., $Z$ takes as values pairs
$({\bf u}, {\bf v})$. As above, we denote the choice of measurement on
each side by $A$ and $B$. Restricting to projective spin measurements,
the two choices may be labelled by three-dimensional vectors, denoted
${\bf a}$ and ${\bf b}$, respectively, indicating their orientation in
space (see, for example,~\cite{BBGKLLS} for more details). The
predictions for the measurement outcomes $X$ and $Y$, as prescribed by
Leggett's model, are then given by
\begin{eqnarray}\label{eq:leg1}
P_{X|{\bf a u v}}(x)&=&\frac{1}{2}(1+(-1)^x{\bf a}\cdot{\bf u})\\
P_{Y|{\bf b u v}}(y)&=&\frac{1}{2}(1+(-1)^y{\bf b}\cdot{\bf v})\, .\label{eq:leg2}
\end{eqnarray}

In order to completely define the model, one would also need to assign
probabilities to all possible values $Z = ({\bf u}, {\bf v})$, i.e.,
specify a probability distribution $P_Z$ (which, in general, depends
on the quantum state). However, the following theorem, which is a
corollary of results in~\cite{Leggett,GPKBZAZ,BBGKLLS,ColbeckRenner},
implies that there is no such assignment for which Leggett's model can
be made compatible with quantum theory.

\begin{theorem}[No higher theories obey the Leggett
  conditions] \label{thm_Leggett} Let $A$, $B$, $X$, $Y$ and $Z$ be
  RVs. Then there exists a quantum distribution, $P_{XY|AB\psi_2}$,
  such that at least one of the following cannot hold:
  \begin{itemize}
  \item \emph{Freedom of choice:} $A$ and $B$ are free with respect to
    any of the causal orders depicted in Figure~\ref{fig:chronology};
  \item \emph{Compatibility with quantum theory:} $P_{XY|ABZ}$ is
    compatible with $P_{XY|AB\psi_2}$;
  \item \emph{Leggett rule:} $P_{XY|ABZ}$ satisfies
    Eqs.~\ref{eq:leg1} and~\ref{eq:leg2} for all values $A=a$, $B=b$,
    and $Z = ({\bf u}, {\bf v})$.
  \end{itemize}
\end{theorem}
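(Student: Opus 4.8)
The plan is to assume that all three conditions hold for the Leggett theory under consideration and then to \emph{choose} a quantum distribution $P^N_{XY|AB\psi_2}$ for which they become jointly inconsistent. Since we are free to pick the measurement directions and the integer $N$, the idea is to reuse the chained correlation measure $I_N$ of Section~\ref{sec_CB}, just as in the proof of Theorem~\ref{thm_Bell}, but now exploiting the fact that the Leggett rule forces the conditional marginals to be \emph{biased} rather than uniform. First I would invoke freedom of choice: because $A$ and $B$ are free with respect to a causal order in which $A \nbef Z$ and $B \nbef Z$, the parameter $Z$ is independent of the settings $(A,B)$, so that $P_{XY|ab} = \sum_z P_Z(z)\,P_{XY|abz}$ and, by linearity of the measure in the conditional probabilities, $I_N(P_{XY|AB}) = \sum_z P_Z(z)\,I_N^z$, where $I_N^z$ is the measure evaluated on $P_{XY|ABz}$. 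Freedom of choice also gives full support of the settings given each $z$, so every $P_{XY|abz}$ is well defined.

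Next I would lower-bound $I_N^z$ for each fixed $z=({\bf u},{\bf v})$ using only the marginals dictated by the Leggett rule. Writing $m_a := P_{X|az}(0) = \frac{1}{2}(1+{\bf a}\cdot{\bf u})$ and $n_b := P_{Y|bz}(0)=\frac{1}{2}(1+{\bf b}\cdot{\bf v})$ (note that Eqs.~\ref{eq:leg1}--\ref{eq:leg2} already assert these marginals are independent of the other party's setting), the elementary Fr\'echet-type bounds $P(X \neq Y|abz) \geq |m_a - n_b|$ and $P(X=Y|abz)\geq |m_a + n_b - 1|$ hold for \emph{any} joint distribution with these marginals, irrespective of the (unspecified) correlations in Leggett's model. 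Summing over the neighbouring pairs $|a-b|=1$ and telescoping with the triangle inequality along the chain $m_0, n_1, m_2, n_3, \ldots, n_{2N-1}$ yields $\sum_{|a-b|=1} P(X\neq Y|abz) \geq |m_0 - n_{2N-1}|$, so that
\[
I_N^z \;\geq\; |m_0+n_{2N-1}-1| + |m_0 - n_{2N-1}| \;\geq\; |2m_0 - 1| \;=\; |{\bf a}_0 \cdot {\bf u}| ,
\]
a bound that is crucially \emph{independent of} $N$.

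Finally I would average over $z$ and compare with the quantum value. Averaging gives $I_N(P_{XY|AB}) \geq \langle |{\bf a}_0\cdot{\bf u}|\rangle_{Z}$, whereas compatibility with quantum theory forces $I_N(P_{XY|AB}) = I_N(P^N_{XY|AB\psi_2}) = 2N\sin^2\frac{\pi}{4N} \leq \frac{\pi^2}{8N}$, which tends to $0$ as $N\to\infty$. Because Leggett's vectors ${\bf u}$ lie on the unit sphere, one can orient the measurement plane and the reference direction ${\bf a}_0$ so that $\langle |{\bf a}_0\cdot{\bf u}|\rangle_Z > 0$; averaging $|{\bf a}_0\cdot{\bf u}|$ over the choice of ${\bf a}_0$ already produces a strictly positive value, so some admissible direction works. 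Choosing $N$ large enough that $\frac{\pi^2}{8N} < \langle |{\bf a}_0\cdot{\bf u}|\rangle_Z$ then contradicts the displayed lower bound, which establishes the theorem.

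The main obstacle I anticipate is the interplay between the last two steps. The chained bound is only useful because it does \emph{not} decay with $N$, so one must be careful that the telescoping survives all the way around the chain and that the quantity surviving it, $|{\bf a}_0\cdot{\bf u}|$, does not vanish after averaging. This is the geometric heart of the argument: one has to argue that the distribution of bias vectors cannot be perpendicular to \emph{every} admissible reference direction simultaneously, so that the average stays bounded away from zero for some choice of the plane and of ${\bf a}_0$. Once that positivity is secured, the contradiction with the vanishing quantum value $2N\sin^2\frac{\pi}{4N}$ in the large-$N$ limit is immediate. It is precisely the use of the chained measure $I_N$ (rather than a single CHSH-type inequality at $N=2$) that allows the argument to defeat a Leggett bias of \emph{arbitrarily small but nonzero} magnitude.
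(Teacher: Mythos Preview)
Your argument is correct. The underlying inequality you derive, $I_N^z\geq|2m_{a_0}-1|=|{\bf a}_0\cdot{\bf u}|$, is precisely the bound $I_N(P_{XY|ABz})\geq 2D(P_{X|a_0z},P_{\bar X})$ from the paper's Lemma~\ref{lem:2}, since for binary outcomes your Fr\'echet bound $P(X\neq Y|abz)\geq|m_a-n_b|$ is exactly $D(P_{X|az},P_{Y|bz})\leq P(X\neq Y|abz)$ and $2D(P_{X|a_0z},P_{\bar X})=|2m_{a_0}-1|$.

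The route, however, differs from the paper's. The paper does not prove Theorem~\ref{thm_Leggett} directly at all: it simply observes that whenever ${\bf a}$ is not orthogonal to ${\bf u}$ the Leggett marginal $P_{X|{\bf a}z}$ is non-uniform while the quantum marginal on $\ket{\psi_2}$ is $\tfrac12$, so the Leggett model is \emph{more informative} than quantum theory, and then invokes the general non-extendibility result (Lemma~\ref{thm_entanglementpredict}, respectively Theorem~\ref{thm_main}). Your proof is a self-contained unrolling of that lemma specialized to the Leggett marginals. What your approach buys is that the mechanism is visible in one pass|the $N$-independent bias $|{\bf a}_0\cdot{\bf u}|$ confronting the vanishing quantum value $2N\sin^2\tfrac{\pi}{4N}$|without passing through the abstract ``at least as informative'' framework; what the paper's approach buys is modularity, exhibiting the Leggett case as an immediate corollary of the main theorem. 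Your geometric concern about ensuring $\langle|{\bf a}_0\cdot{\bf u}|\rangle_Z>0$ is exactly the point the paper glosses with the phrase ``whenever the measurement orientation ${\bf a}$ is not orthogonal to ${\bf u}$''; since Leggett's ${\bf u}$ are unit vectors, your averaging-over-${\bf a}_0$ argument is the correct way to secure a good direction.
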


We will not give a proof of this theorem here, since it follows from
the more general results presented in the next section.  To see this,
it is sufficient to observe that, when measuring the entangled state
$\ket{\psi_2}=\frac{1}{\sqrt{2}}(\ket{\up\up}+\ket{\dn\dn})$, for
instance, quantum theory prescribes that $P_{X|{\bf
    a}}(x)=\frac{1}{2}$, independently of the orientation ${\bf a}$ of
the measurement. Conversely, Leggett's model predicts a non-uniform
distribution whenever the measurement orientation ${\bf a}$ is not
orthogonal to the vector ${\bf u}$. The Leggett model is therefore
more informative than quantum theory, and hence excluded by
Lemma~\ref{thm_entanglementpredict} (as well as the more general
Theorem~\ref{thm_main}) below.

\subsection{Other Constraints}

Here we summarize a few other known constraints on theories compatible
with quantum mechanics.  One of the first results in this direction
was that the quantum outcomes cannot be predetermined within a
non-contextual model~\cite{KS,Bell_KS}.  In such a model, one assumes
the existence of a map from the set of projectors to the set $\{0,1\}$
such that for every set of projectors that constitute a POVM, only one
member of that set is mapped to $1$ (the element that maps to $1$ is
interpreted as the outcome that will occur if a measurement described
by that POVM is carried out).  Such a model is \emph{non-contextual}
in that whether or not a particular outcome occurs depends only on the
individual projectors, and not on the set of projectors making up the
POVM.  The Bell-Kochen-Specker theorem~\cite{KS,Bell_KS} implies that
no such assignment can exist if the Hilbert space dimension is at
least~$3$.

Hardy~\cite{Hardy_ontbag} later showed that within any extended
theory, an infinite number of underlying states are required, even to
describe a single qubit, and Montina~\cite{Montina3,Montina} proved,
under the assumption of Markovian dynamics, that the number of real
parameters that an extended theory needs to characterize a state in
Hilbert space dimension $N$ is at least $2N-2$ (the same as the number
of parameters needed to specify a pure quantum state up to global
phase).

In addition, a claim in the same spirit as our non-extendibility
theorem (presented in the next Section) has been obtained
recently~\cite{ChenMontina} under the assumption of measurement
non-contextuality, introduced in~\cite{Spekkens_context}.

\section{The non-extendibility theorem} \label{sec_extended}

This section is devoted to the key result of this \chapter, asserting
that quantum theory is maximally informative.  Stated informally, we
make the following claim, first made in~\cite{CR_ext}.
\begin{claim}\label{claim1}
  No alternative theory that is compatible with quantum theory and
  allows for free choice (with respect to the discussed causal orders)
  can give improved predictions.
\end{claim}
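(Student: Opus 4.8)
The plan is to reduce the claim to the bipartite chained-correlation scenario of Section~\ref{sec_CB} and to show there that the extra parameter $Z$ cannot sharpen the quantum prediction. I would first treat measurements on the maximally entangled state $\ket{\psi_2}$, for which quantum theory predicts the uniform marginal $P_{X|a}(x)=\frac{1}{2}$, and prove that any theory compatible with these predictions and respecting free choice must satisfy $P_{X|az}=P_{X|a}$ for every $a$ and every $z$ with $P_{AZ}(a,z)>0$. Since this says that $Z$ carries no information about $X$ beyond $A$, it is precisely the assertion that the prediction cannot be improved. The ingredient that drives the argument is the family of chained measurements with settings in $\cA_N$ and $\cB_N$, whose quantum correlation obeys $I_N(P^N_{XY|AB\psi_2})\leq \pi^2/(8N)$ by~\eqref{eq_quantcorr}.

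The next step is to turn freedom of $A$ and $B$ (Definition~\ref{def:free}), with respect to the causal orders of Figure~\ref{fig:chronology}, into usable independence relations. Freedom of $A$ gives $P_{ABYZ}=P_A\times P_{BYZ}$ and freedom of $B$ gives $P_{ABXZ}=P_B\times P_{AXZ}$. Together these yield (a) independence of $Z$ from the settings, $P_{Z|ab}=P_Z$, so that the conditionals $P_{X|az}$ and $P_{Y|bz}$ are well defined and $\sum_z P_Z(z)\,P(X\neq Y|a,b,z)=P(X\neq Y|a,b)$; and (b) parameter independence, $P_{X|abz}=P_{X|az}$ and $P_{Y|abz}=P_{Y|bz}$, so that the single-outcome conditionals are unambiguous (independent of the partner setting) and may be compared across neighbouring links.

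The heart of the proof is a telescoping estimate in variational distance at fixed $z$. For neighbouring settings one has $D(P_{X|az},P_{Y|bz})\leq P(X\neq Y|a,b,z)$, and for the closing pair $D\bigl(P_{X|0,z},\mathrm{flip}(P_{Y|2N-1,z})\bigr)\leq P(X=Y|0,2N-1,z)$, where $\mathrm{flip}$ swaps the two outcomes. Chaining these around the loop $0\!-\!1\!-\!2\!-\cdots\!-\!(2N-1)$ with the triangle inequality, and using $D(P,\mathrm{flip}(P))=2D(P,U)$ for the uniform distribution $U$, gives $2\,D(P_{X|0,z},U)$ bounded by the per-$z$ sum of these disagreement terms. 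Averaging over $z$ using relation (a) then yields $2\sum_z P_Z(z)\,D(P_{X|0,z},U)\leq I_N\leq \pi^2/(8N)$.

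Finally I would let $N\to\infty$. The preparation $\ket{\psi_2}$ and the measurement $A=0$ are the same for every $N$, and the theory's prediction $P_{X|0,z}$ is by definition a function of the pair $(0,z)$ alone; hence the left-hand side above does not depend on $N$ while the right-hand side vanishes, forcing $P_{X|0,z}=P_{X|0}$ for all $z$ with $P_Z(z)>0$ (and similarly for every setting). I expect two steps to be the main obstacles. The first is justifying this $N$-independence rigorously: a single alternative theory must be simultaneously compatible with the whole family $\{P^N_{XY|AB\psi_2}\}_{N\geq2}$ through one fixed parameter $Z$, which is exactly what condition (ii), $A\nbef Z$ and $B\nbef Z$, underwrites. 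The second, needed to reach the full claim rather than just the maximally entangled case, is the passage to an arbitrary pure state and projective measurement; I would handle this by realising the general measurement as a steering measurement on one half of a (possibly higher-dimensional) maximally entangled state and invoking the corresponding higher-dimensional chained inequality, so that the entangled-state result applies directly.
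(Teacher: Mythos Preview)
Your treatment of the maximally entangled case is essentially the paper's: you derive parameter independence and $P_{Z|ab}=P_Z$ from freedom of $A$ and $B$ (this is the content of Lemma~\ref{lem:1}), then run the telescoping variational-distance estimate around the chain (Lemma~\ref{lem:2}), average over~$z$, and let $N\to\infty$ (Lemma~\ref{thm_entanglementpredict}). The observation that $P_{X|0,z}$ does not depend on~$N$ because the setting $a=0$ and the state $\ket{\psi_2}$ are fixed is exactly how the paper closes that step, and the extension to other settings by symmetry is also what the paper does.

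The gap is in your final paragraph. The bipartite state that arises when one couples an arbitrary measurement to a device register is $\ket{\phi}_{SD}=\sum_{\hat y}\sqrt{p_{\hat y}}\,\ket{\hat y}_S\ket{\hat y}_D$, whose Schmidt coefficients are the square roots of the outcome probabilities~$p_{\hat y}$; this is \emph{not} maximally entangled unless the outcome distribution is uniform. Your proposed reduction---``realise the general measurement as a steering measurement on one half of a higher-dimensional maximally entangled state and invoke the corresponding chained inequality''---does not go through here: the physical state you must analyse is $\ket{\phi}_{SD}$ itself, and for non-maximally entangled states the chained quantity $I_N$ does not tend to zero, so the telescoping bound no longer forces $P_{X|az}$ to be uniform. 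Steering from a maximally entangled ancilla would at best prepare $\ket{\psi}$ with the wrong (uniform) probabilities, not reproduce the actual Born weights~$p_{\hat y}$.

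The paper bridges this with a genuinely additional ingredient you omit: an auxiliary \emph{embezzling} state $\ket{\Gamma^k}_{\tilde S\tilde D}$ (Lemma~\ref{lem_embezzling}). One adjoins $\ket{\Gamma^k}$ to $\ket{\phi}_{SD}$ and applies local isometries, controlled by the Schmidt label $\hat y$, that extract $m(\hat y)\approx 2^r p_{\hat y}$ fresh maximally entangled levels from the embezzler; the combined $SS'\!:\!DD'$ state is then (arbitrarily close to) $r$ copies of $\ket{\psi_2}$, to which your Part~II argument applies directly. The POVM $F_y^{b_0}=\proj{\hat y}_D\otimes\id_{\tilde D}$ reads out the original outcome, and no-signalling transfers the non-extendibility of $X$ back to $Y$. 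Without this catalytic step (or an equivalent device for locally converting non-maximal into maximal entanglement), your sketch does not reach the full Claim~\ref{claim1}.
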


The main technical statement is a generalization of the theorems
discussed in the previous section.  The setup is broadly the same, but
instead of the condition that the higher theory remains compatible
with quantum theory for measurements on maximally entangled states, we
require this for a wider class of states.  Furthermore, rather than
considering theories that satisfy local determinism or the Leggett
rule, the claim is about arbitrary theories that make improved
predictions.

The main technical theorem is as follows (this should be read as a
purely mathematical statement about bipartite pure states, whose
significance to the extendibility of quantum theory will be explained
subsequently). 
\begin{theorem}\label{thm_main} Let $\ket{\phi}_{S D}$ be a pure state and
  let $\{\ket{\hat{y}}_D\}$ be a Schmidt basis on $D$. Then there
  exists a state $\ket{\Gamma}_{\tilde{S} \tilde{D}}$ and local POVMs
  $\{E_x^a\}$ and $\{F_y^b\}$ on $S \tilde{S}$ and $D\tilde{D}$,
  respectively, with $F_y^{b_0}=\proj{\hat{y}}_D\ot\id_{\tilde{D}}$
  for some $b=b_0$, such that, for any RVs $A$, $B$, $X$, $Y$ and $Z$,
  at least one of the following cannot hold:\footnote{Strictly
    speaking, the entangled state and POVMs should be sequences of
    entangled states and POVMs, for which the maximum improvement in
    the prediction tends to $0$ (c.f.\
    Lemma~\ref{thm_entanglementpredict}).}
  \begin{itemize}
  \item \emph{Freedom of choice:} $A$ and $B$ are free with respect to
    any of the causal orders depicted in
    Figure~\ref{fig:chronology};
  \item \emph{Compatibility with quantum theory:} $P_{XY|ABZ}$ is
    compatible with the prediction $P_{XY|AB(\phi \otimes \Gamma)}$ of
    quantum theory for the measurements $\{E_x^a\}$ and $\{F_y^b\}$ on
    $\ket{\phi}_{S D} \otimes \ket{\Gamma}_{\tilde{S}
      \tilde{D}}$.\footnote{Formally, $P_{XY|AB(\phi \otimes \Gamma)}$
      is given by \begin{align*} P_{XY|ab(\phi \otimes \Gamma)}(x,y) =
        \tr ((E_x^a\ot F_b^y)\proj{\phi \otimes \Gamma}) \, .
  \end{align*}}
\item \emph{Improved predictions:} $P_{Y|b_0\phi}$ is not as
  informative as $P_{Y|b_0 Z}$.
  \end{itemize}
\end{theorem}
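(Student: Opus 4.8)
The plan is to assume freedom of choice and compatibility with quantum theory and to derive the failure of the improved-predictions clause, i.e.\ that $P_{Y|b_0\phi}$ is at least as informative as $P_{Y|b_0Z}$ (so $Z$ adds nothing to the Schmidt-basis prediction $P_{Y|b_0\phi}(y)=\lambda_y$ that is already available from $\phi$). I would proceed in two stages: first prove the statement for maximally entangled states---this is the content of Lemma~\ref{thm_entanglementpredict}---and then reduce the general pure state $\ket\phi_{SD}$ to that case using the auxiliary state $\ket\Gamma_{\tilde S\tilde D}$ and the composite POVMs $\{E^a_x\}$, $\{F^b_y\}$, with $F^{b_0}$ recovered as a coarse-graining.

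For the maximally entangled case the engine is a chained-Bell estimate built on the correlation measure $I_N$ of Section~\ref{sec_CB}. Take the chain of settings $a\in\cA_N$, $b\in\cB_N$ arranged so that neighbouring pairs are near-perfectly correlated and the wrap-around pair $(0,2N-1)$ is near-perfectly anti-correlated. By Lemma~\ref{lem:1}, freedom of choice yields parameter independence, so for every $z$ with $P_Z(z)>0$ the one-party conditionals $P_{X|az}$ and $P_{Y|bz}$ are simultaneously well defined (independent of the distant setting). The per-link inequality $D(P_{X|az},P_{Y|bz})\le P(X\neq Y\mid a,b,z)$, together with $D(P_{X|0z},\overline{P_{Y|(2N-1)z}})\le P(X=Y\mid 0,2N-1,z)$ for the anti-correlated link (the bar denoting the bit-flipped distribution), chains around the loop---using that the bit-flip preserves $D$---to give $D\bigl(P_{X|0z},\overline{P_{X|0z}}\bigr)\le I_N(P_{XY|ABz})$. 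Since freedom of choice gives $P_{Z|ab}=P_Z$ and $I_N$ is linear in its argument, averaging yields $\sum_z P_Z(z)\,I_N(P_{XY|ABz})=I_N(P_{XY|AB})$, which by compatibility equals the quantum value~\eqref{eq_quantcorr} and is bounded by $\pi^2/(8N)$. For a binary outcome $D(P_{X|0z},\overline{P_{X|0z}})=|2P_{X|0z}(0)-1|$, so on average over $z$ the prediction sits within $\pi^2/(8N)$ of the uniform quantum marginal; along a sequence $N\to\infty$ this forces $Z$-independence.

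To handle a general $\ket\phi_{SD}=\sum_y\sqrt{\lambda_y}\ket{s_y}_S\ket{\hat y}_D$, I would approximate the $\lambda_y$ by rationals $n_y/M$ and choose $\ket\Gamma_{\tilde S\tilde D}$ together with a fine projective measurement $\{F^{b_\ast}_k\}_{k=1}^M$ on $D\tilde D$ and matching measurements $\{E^a_x\}$ on $S\tilde S$ that realise, at least asymptotically along a sequence, the correlations of a maximally entangled state $\ket{\psi_M}$. Partitioning $\{1,\dots,M\}$ into blocks $G_y$ of size $n_y$, the prescribed $F^{b_0}_y=\proj{\hat y}_D\ot\id_{\tilde D}$ is arranged to be the coarse-graining $F^{b_0}_y=\sum_{k\in G_y}F^{b_\ast}_k$, which reproduces $P_{Y|b_0\phi}(y)=n_y/M\approx\lambda_y$. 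Lemma~\ref{thm_entanglementpredict} applied to $\{F^{b_\ast}_k\}$ on $\ket{\psi_M}$ shows $Z$ gives no improvement for the fine outcome $K$; since ``at least as informative as'' is preserved under the deterministic block map $K\mapsto Y$ and under compatibility with the common $\bar P$, the same holds for $Y$ under $b_0$---the negation of the improved-predictions clause, up to the error vanishing along the sequence (cf.\ the footnote).

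The main obstacle is the reduction, not the chained-Bell estimate. A non-maximally-entangled $\ket\phi$ lacks the rotational symmetry of $\ket{\psi_M}$: away from the Schmidt basis there is no continuous family of perfectly correlated measurement pairs, so the chain of near-perfect correlations that drives $I_N\to0$ cannot be run on $\ket\phi$ directly. The whole content of the construction is therefore to let $\ket\Gamma$ supply the missing entanglement so that max-entangled correlations are recovered for the fine measurement---an asymptotic, embezzlement-type local conversion whose error is controlled by the sequence limit. Secondary points to check are that the local operations implementing this conversion genuinely fold into \emph{local} POVMs on $S\tilde S$ and $D\tilde D$ with $F^{b_0}$ of the prescribed form, that the per-link bound survives conditioning on $z$, and that the rational approximation of the $\lambda_y$ degrades the prediction only by an amount tending to $0$.
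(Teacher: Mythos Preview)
Your proposal follows the paper's three-part architecture: Lemma~\ref{lem:1} for no-signalling from free choice, the chained-Bell bound (your derivation is exactly Lemma~\ref{lem:2} and Lemma~\ref{thm_entanglementpredict}), and an embezzlement-based reduction (Lemma~\ref{lem_embezzling}) from the general $\ket\phi_{SD}$ to maximally entangled qubit pairs. The approximation of Schmidt weights by rationals and the role of $\ket\Gamma$ are as in the paper.

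The one substantive difference, and the one place where your write-up has a small gap, is the transfer of the non-extendibility conclusion to the Schmidt-basis measurement $F^{b_0}$. You introduce a fine $Y$-side measurement $F^{b_\ast}$ whose coarse-graining is $F^{b_0}$ and propose to ``apply Lemma~\ref{thm_entanglementpredict} to $\{F^{b_\ast}_k\}$'' to conclude that $Z$ says nothing about the fine outcome $K$. But Lemma~\ref{thm_entanglementpredict} (and the underlying Lemma~\ref{lem:2}) constrains the $X$-side for settings $a\in\cA_N$; even the symmetric $Y$-side version would apply only to $b\in\cB_N$, i.e.\ odd multiples of $\pi/(2N)$, and a computational-basis $F^{b_\ast}$ is not among those. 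The paper closes this gap differently: it does \emph{not} ask the chained-Bell argument to speak about any $Y$-side setting at all. Instead it takes $a_0=(0,\ldots,0)\in\cA_N^{\times r}$ on the $X$-side (which \emph{is} in the chain), observes that after the embezzlement isometries the first component $\hat X$ of the $X$-outcome under $a_0$ is perfectly correlated with the $Y$-outcome under $b_0$, and then uses Lemma~\ref{lem:1} plus $P(\hat X=Y\mid a_0,b_0)=1$ to carry $P_{X|a_0Z}\approx P_{X|a_0(\phi\otimes\Gamma)}$ over to $P_{Y|b_0Z}=P_{Y|b_0\phi}$. In effect the coarse-graining you want happens on the $X$-side (the marginalisation $X=(\hat X,\hat X')\mapsto\hat X$), and the bridge to $Y$ is perfect correlation, not a coarse-graining within the $F$-family. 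Your route is easily repaired by this same move; as written, the direct invocation of Lemma~\ref{thm_entanglementpredict} on $F^{b_\ast}$ is the step that does not go through.
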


To understand the implications of this theorem, consider a fixed
measurement $\hat{a}$ on a system $S$.  Assume that, according to
quantum theory, the system (before the measurement) is in a pure
state, denoted $\psi$, and that the measurement corresponds to a
projective POVM, $\{\hat{E}^{\hat{a}}_{\hat{x}}\}$. Quantum theory
then gives a probabilistic prediction $P_{\hat{X}| \hat{a} \psi}$ for
the measurement outcome $\hat{X}$, which depends on $\psi$ and
$\{\hat{E}^{\hat{a}}_{\hat{x}}\}$ (see Eq.~\ref{eq_QMprediction}).
Our aim is to compare this quantum-mechanical prediction with the
prediction $P_{\hat{X}|\hat{a} Z}$ that may be obtained by an
alternative theory, whose parameters we denote by $Z$.

In order to relate this to the theorem, let us assume that the freedom
of choice condition holds, from which it follows that the alternative
theory is no-signalling (i.e., it would be impossible to signal in the
higher theory, even knowing the parameter $Z$).  We then consider the
joint state of the measured system, $S$, and the measurement device,
$D$, after the measurement $\hat{a}$. Following the discussion in
Section~\ref{sec_theories}, according to quantum theory, this state
can be assumed to have the form
\begin{align} \label{eq_entaftermeas}
  \ket{\phi}_{S  D} = \sum_{\hat{x}} \sqrt{\hat{E}^{\hat{a}}_{\hat{x}}}\ket{\psi}_S \ot
  \ket{\hat{x}}_D \, \ .
\end{align}
Note that the POVM $\{F_y^{b_0}\}$ defined by Theorem~\ref{thm_main}
corresponds to a measurement of $D$ in the basis
$\{\ket{\hat{x}}_D\}$. The outcome, $Y$, of this measurement can
therefore be seen as a copy of the outcome $\hat{X}$ of the original
measurement, specified by $\{\hat{E}^{\hat{a}}_{\hat{x}}\}$. In
particular, the prediction that any theory compatible with quantum
theory makes about $Y$ must be identical to the prediction it makes
about $\hat{X}$, i.e., we have
\begin{align*}
  P_{\hat{X}|\hat{a} \psi} & = P_{Y|b_0 \phi}  \\
  P_{\hat{X}|\hat{a} Z} & = P_{Y|b_0 Z} \ .
\end{align*}
(Note that because the free choice assumption implies that the
alternative theory is no-signalling, the prediction the alternative
theory makes about $Y$ does not depend on $\hat{a}$, for example.)

We now apply Theorem~\ref{thm_main} to $\ket{\phi}_{S D}$. If we
assume that the alternative theory, in addition to being compatible
with quantum theory, satisfies the freedom of choice assumption, then
the third condition of the theorem cannot hold, i.e., $P_{Y|b_0 \phi}$
is as informative as $P_{Y|b_0 Z}$.  Using the above
identities, this directly carries over to the original measurement
$\hat{a}$, i.e., the quantum-mechanical prediction $P_{\hat{X}|\hat{a}
  \psi}$ is as informative as the prediction $P_{\hat{X}|\hat{a} Z}$
of the alternative theory.  We hence establish Claim~\ref{claim1}.

\section{Proof of Theorem~\ref{thm_main}} \label{sec_proof}

The theorem follows from three statements, which we formulate and
prove separately. An overview of the argument is as follows.  We
consider the previously introduced bipartite scenario and any of the
causal orders depicted in Figure~\ref{fig:chronology}.  We begin by
showing that free choice with respect to this causal order implies that
the alternative theory is no-signalling (see Lemma~\ref{lem:1}).  In
the second part of the argument, we show that for measurements on
maximally entangled states, if quantum theory is correct, no higher
theory can give improved predictions about the outcomes (see
Lemma~\ref{thm_entanglementpredict}).  In the final part of the
argument, we generalize this to measurements on an arbitrary bipartite
entangled state.  More precisely, we show that for any such state,
there exist local measurements that generate correlations arbitrarily
close to those generated by $r$ maximally entangled states for some
sufficiently large integer $r$.  Hence, from the second part of the
argument, these measurements can have no improved predictions.

\subsection{Part~I: No-signalling from free choice}\label{sec:NS}

In this part, we show that if $A$ and $B$ are free choices with
respect to one of the given causal orders, then there is no signalling
within the alternative theory (i.e.\ no signalling even given access
to $Z$).\footnote{As explained in Section~\ref{sec_theories}, the
  interpretation of this as ``no-signalling'' may change if $Z$ is
  thought of as in principle unlearnable.  However, we stress that the
  same mathematical conditions remain in that case too.}

\begin{lemma}\label{lem:1}
  The freedom of choice assumption implies $P_{XZ|AB}=P_{XZ|A}$ and
  $P_{YZ|AB}=P_{YZ|B}$.
\end{lemma}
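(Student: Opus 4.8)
The plan is to translate the freedom-of-choice hypothesis into two explicit independence relations and then obtain each claim by a one-line marginalisation. First I would read off the sets $\Gamma_A$ and $\Gamma_B$ from the causal-order constraints (i)--(iii). Condition~(i) gives $A\bef X$, so $X\notin\Gamma_A$; together with $B\bef Y$ and transitivity it also forces $A\nbef B$ and $B\nbef A$ (were $A\bef B$, then $A\bef B\bef Y$ would give $A\bef Y$, contradicting~(iii)), hence $B\in\Gamma_A$ and $A\in\Gamma_B$. Condition~(ii) places $Z$ in both $\Gamma_A$ and $\Gamma_B$, and condition~(iii) places $Y\in\Gamma_A$ and $X\in\Gamma_B$. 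Thus $\Gamma_A=\{B,Y,Z\}$ and $\Gamma_B=\{A,X,Z\}$, so Definition~\ref{def:free} makes ``$A$ is free'' equivalent to $P_{ABYZ}=P_A\times P_{BYZ}$ and ``$B$ is free'' equivalent to $P_{ABXZ}=P_B\times P_{AXZ}$.

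For the first identity $P_{XZ|AB}=P_{XZ|A}$, which involves only $A,B,X,Z$, I would invoke the relation coming from $B$ being free, $P_{ABXZ}=P_B\times P_{AXZ}$. Summing it over $x$ and $z$ gives $P_{AB}=P_A\times P_B$; dividing the original relation by $P_A$ then yields $P_{XZB|A}=P_B\times P_{XZ|A}=P_{B|A}\times P_{XZ|A}$, which is precisely the conditional independence $P_{XZ|AB}=P_{XZ|A}$. The second identity $P_{YZ|AB}=P_{YZ|B}$ is the mirror image, now using that $A$ is free: summing $P_{ABYZ}=P_A\times P_{BYZ}$ over $y$ and $z$ again gives $P_{AB}=P_A\times P_B$, and dividing by $P_B$ gives $P_{YZA|B}=P_A\times P_{YZ|B}=P_{A|B}\times P_{YZ|B}$, i.e.\ $P_{YZ|AB}=P_{YZ|B}$.

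I do not expect a genuinely hard step, since the argument is elementary manipulation of distributions; the care lies entirely in two bookkeeping points. The first is the causal-order analysis above---in particular checking $A\nbef B$ and $B\nbef A$, which is exactly what puts $B$ and $A$ inside $\Gamma_A$ and $\Gamma_B$ and thereby lets the single freedom assumption for $B$ (resp.\ $A$) decouple it from the entire triple. The second is well-definedness of the conditionals: the derived relation $P_{AB}=P_A\times P_B$ shows that $P_{AB}(a,b)>0$ exactly when $P_A(a)>0$ and $P_B(b)>0$, so $P_{XZ|AB}$ and $P_{XZ|A}$ are defined on the same set of pairs and the asserted identity holds there with no division by zero (and, as noted after Theorem~\ref{thm_Bell}, freedom of choice may anyway be taken to give $P_{AB}$ full support). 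Notably, only the one-sided freedom of each of $A$ and $B$ is used, so no independence assumption beyond Definition~\ref{def:free} is required.
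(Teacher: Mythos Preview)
Your proposal is correct and follows essentially the same route as the paper: both use that $A$ free gives $P_{ABYZ}=P_A\times P_{BYZ}$ to obtain $P_{YZ|AB}=P_{YZ|B}$, and the symmetric statement for $B$ to obtain $P_{XZ|AB}=P_{XZ|A}$, via elementary conditional-probability manipulations. Your additional explicit derivation of $\Gamma_A$ and $\Gamma_B$ (in particular the check that $A\nbef B$ and $B\nbef A$) and the well-definedness discussion are welcome bookkeeping details that the paper leaves implicit, but the argument itself is the same.
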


\begin{proof}
  That $A$ is free within the specified causal order implies
  $P_{A|BYZ}=P_A$ and hence
  \begin{align*}
    P_{YZA|B}&= P_{YZ|B}\times P_{A|BYZ}=P_A\times P_{YZ|B}\,\text{, and}\\
    P_{YZA|B}&= P_{A|B}\times P_{YZ|AB}=P_A\times P_{YZ|AB}\, .
  \end{align*}
  We therefore have $P_{YZ|AB}=P_{YZ|B}$.  The relation
  $P_{XZ|AB}=P_{XZ|A}$ follows by symmetry.
\end{proof}

\subsection{Part~II: Non-extendibility for measurements on maximally
  entangled states} 

In the second part of the argument, we show that the claim holds for
particular measurements on maximally entangled pairs of qubits. The
proof uses the correlation measure $I_N$ introduced in
Section~\ref{sec_CB}. The following lemma shows that this measure,
applied to a distribution $P_{X Y|A B}$, gives a bound on how well any
additional information, $Z$, can be correlated to the outcome $X$.
Note that the lemma is independent of quantum theory and is simply a
property of probability distributions.

\begin{lemma}\label{lem:2}
  Let $P_{XYZ|AB}$ be a distribution that obeys $P_{XZ|AB}=P_{XZ|A}$
  and $P_{YZ|AB}=P_{YZ|B}$.  Then, for all $a$ and $b$, we have
\begin{equation}
  \bigl\langle D(P_{X|abz},P_{\bar{X}}) \bigr\rangle_z \leq\frac{1}{2}I_N(P_{XY|AB}) \, ,\label{eq:markov2}
\end{equation}
where $\langle \cdot \rangle_z$ denotes the average over the values of
$Z$ (distributed according to $P_{Z|ab}$), and $P_{\bar{X}}$ denotes
the uniform distribution on $X$.
\end{lemma}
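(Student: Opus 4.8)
The plan is to reduce the statement to an elementary fact about binary random variables and then propagate it around the ``chain'' of measurement pairs that defines $I_N$. First I would record the consequences of the two hypotheses. Marginalizing $P_{XZ|AB}=P_{XZ|A}$ over $X$ (and $P_{YZ|AB}=P_{YZ|B}$ over $Y$) shows $P_{Z|ab}=P_{Z|a}=P_{Z|b}$, so that $P_{Z|ab}$ is in fact independent of both $a$ and $b$; dividing then gives $P_{X|abz}=P_{X|az}$ and $P_{Y|abz}=P_{Y|bz}$. In particular, for every $z$ the distributions $P_{X|az}$ and $P_{Y|bz}$ are the genuine marginals of the single joint conditional distribution $P_{XY|abz}$.

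The technical engine is the bound, valid for any two jointly distributed binary variables $U,V$, that $D(P_U,P_V)\le P(U\neq V)$. This holds because $D(P_U,P_V)=|P_U(0)-P_V(0)|=|P(U{=}0,V{=}1)-P(U{=}1,V{=}0)|\le P(U\neq V)$. Applying it with $U=X$, $V=Y$ (conditioned on $abz$) gives $D(P_{X|az},P_{Y|bz})\le P(X\neq Y|abz)$ for each neighbouring pair, and applying it with $V$ replaced by its complement gives the ``flipped'' version $D(P_{X|az},\tilde P_{Y|bz})\le P(X=Y|abz)$, where $\tilde Q(x):=Q(1-x)$ denotes the outcome-swapped distribution.

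Next I would exploit the cyclic structure encoded in $I_N$. The $2N$ settings $0,1,\dots,2N-1$ form a cycle whose $2N-1$ neighbouring edges ($|a-b|=1$) carry the ``correlation'' bounds above and whose single closing edge $(0,2N-1)$ carries the flipped bound. Fixing $z$ and any $a\in\cA_N$, I would walk once around this cycle starting and ending at the node $a$, applying the triangle inequality for $D$ step by step while tracking the parity of outcome-swaps; since the cycle contains exactly one flipped edge, the walk returns to the swapped distribution $\tilde P_{X|az}$. This yields $D(P_{X|az},\tilde P_{X|az})\le P(X=Y|0,2N-1,z)+\sum_{\text{neighbouring }(a',b')}P(X\neq Y|a'b'z)$. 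Using $D(P_{X|az},\tilde P_{X|az})=2D(P_{X|az},P_{\bar X})$ for binary distributions then handles the left-hand side.

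Finally I would average over $z\sim P_{Z|ab}$. Because no-signalling makes $P_{Z|ab}$ independent of the settings, the average of each conditional term $P(\,\cdot\,|a'b'z)$ against $P_{Z|ab}$ reproduces exactly the corresponding unconditioned probability appearing in $I_N$; summing the averaged inequalities gives $2\langle D(P_{X|abz},P_{\bar X})\rangle_z\le I_N(P_{XY|AB})$, which is the claim. I expect the main obstacle to be the bookkeeping in the cyclic step: one must check that starting the walk at an arbitrary $a$ still traverses all $2N-1$ correlation edges and the one anti-correlation edge exactly once, and that the swap-parity is tracked consistently so the walk genuinely closes onto $\tilde P_{X|az}$ rather than $P_{X|az}$. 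A secondary point requiring care is that the per-$z$ marginals fed into the triangle inequality are legitimate marginals of one joint $P_{XY|abz}$, which is exactly what the no-signalling reductions of the first step guarantee.
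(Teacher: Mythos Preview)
Your proposal is correct and follows essentially the same route as the paper: derive the no-signalling reductions $P_{X|abz}=P_{X|az}$, $P_{Y|abz}=P_{Y|bz}$, $P_{Z|ab}=P_Z$, use $D(P_U,P_V)\le P(U\neq V)$ on each edge, chain the resulting bounds around the $2N$-cycle via the triangle inequality to land on $D(P_{X|az},\tilde P_{X|az})=2D(P_{X|az},P_{\bar X})$, and then average over $z$. The only cosmetic difference is that the paper carries out the chain explicitly for $a_0=0$, $b_0=2N-1$ and then invokes symmetry for general $a,b$, whereas you start the walk at an arbitrary $a$; the content is the same.
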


The proof is based on an argument given in~\cite{CR_ext}, which
develops results of~\cite{BHK},~\cite{BKP} and~\cite{ColbeckRenner}.

\begin{proof}
  We first consider the quantity $I_N$ evaluated for the conditional
  distribution $P_{XY|ABz} = P_{XY|ABZ}(\cdot, \cdot|\cdot, \cdot,
  z)$, for any fixed $z$. The idea is to use this quantity to bound
  the variational distance between the conditional distribution
  $P_{X|az}$ and its negation, $1-P_{X|az}$, which corresponds to the
  distribution of $X$ if its values are interchanged.  If this
  distance is small, it follows that the distribution $P_{X|az}$ is
  roughly uniform. Because this holds for any $Z=z$, $X$ must be
  independent of $Z$.

It is first worth noting that the conditions of the lemma
($P_{XZ|AB}=P_{XZ|A}$ and $P_{YZ|AB}=P_{YZ|B}$) imply
$P_{X|ABZ}=P_{X|AZ}$ and $P_{Y|ABZ}=P_{Y|BZ}$ respectively, and
together imply $P_{Z|AB}=P_{Z}$.

Let $P_{\bar{X}}$ be the uniform distribution on $X$. For $a_0 := 0$,
$b_0:= 2N-1$, we have
\begin{align}
&I_N(P_{XY|ABz})\nonumber\\
&=P(X=Y|a_0,b_0,z)+\!\!\!\!\sum_{\genfrac{}{}{0pt}{}{a\in \cA_N, \, b \in \cB_N}{|a-b|=1}}\!\!\!\! P(X\neq Y|a,b,z) \nonumber   \\
&\geq D(1-P_{X|a_0 b_0 z},P_{Y|a_0 b_0 z}) +\!\!\!\!
\sum_{\genfrac{}{}{0pt}{}{a\in \cA_N, \, b \in
    \cB_N}{|a-b|=1}}\!\!\!\! D(P_{X|abz},P_{Y|abz})
\nonumber\\
&= D(1-P_{X|a_0z},P_{Y|b_0z}) +\!\!\!\!
\sum_{\genfrac{}{}{0pt}{}{a\in \cA_N, \, b \in \cB_N}{|a-b|=1}}\!\!\!\! D(P_{X|az},P_{Y|bz})
\nonumber\\
&\geq D(1-P_{X|a_0z},P_{X|a_0z})\nonumber \\
&= 2D(P_{X|a_0b_0z}, P_{\bar{X}}) \label{ItoD} \, .
\end{align}
The first inequality follows from the fact that $D(P_{X|\Omega},
P_{Y|\Omega}) \leq P(X \neq Y | \Omega)$ for any event $\Omega$ (see
Lemma~\ref{lem:Dbound} in Appendix~\ref{sec_dist}).  Furthermore, we
have used the conditions $P_{X|abz}=P_{X|az}$ and
$P_{Y|abz}=P_{Y|bz}$, and the triangle inequality for $D$.  By
symmetry, this relation holds for all $a$ and $b$.  

We now take the average over $z$ on both sides of~\eqref{ItoD}. The
left-hand-side gives
\begin{align}
&\sum_{z}P_{Z|ab}(z)I_N(P_{XY|ABz})\nonumber\\
&=\sum_{z}P_{Z}(z)I_N(P_{XY|ABz})\nonumber\\
&=\sum_{z}P_{Z|a_0b_0}(z)P(X=Y|a_0,b_0,z)+\nonumber\\
&\sum_{\genfrac{}{}{0pt}{}{a\in \cA_N, \, b \in \cB_N}{|a-b|=1}}\sum_{z}P_{Z|ab}(z)P(X\neq
Y|a,b,z)\nonumber\\
&=P(X=Y|a_0,b_0)+\!\!\!\!\sum_{\genfrac{}{}{0pt}{}{a\in \cA_N, \, b \in \cB_N}{|a-b|=1}}\!\!\!\! P(X\neq
Y|a, b, c)\nonumber\\
&=I_N(P_{XY|AB}) \, ,
\end{align}
where we used the condition $P_{Z|ab}=P_{Z}$ several
times. Furthermore, taking the average on the right-hand-side
of~\eqref{ItoD} yields
\begin{align*}
  \sum_zP_{Z|ab}(z)D(P_{X|abz},P_{\bar{X}})=D(P_{XZ|ab},P_{\bar{X}}\times
P_{Z|ab}) \, ,
\end{align*}
which is equivalent to the left-hand side of~\eqref{eq:markov2}. 
\end{proof}

We now apply Lemma~\ref{lem:2} to the quantum correlations
$P^N_{XY|ab\psi_2}$ arising from measurements on the maximally
entangled state $\psi_2$ (c.f.\ Section~\ref{sec_CB}). In the limit
where $N$ tends to infinity, we have
$\lim_{N\rightarrow\infty}I_N(P^N_{XY|AB})=0$, and hence we can
establish that $P_{X|abz}=P^N_{X|ab\psi_2}$ for all $a$, $b$ and $z$
with $P^N_{ABZ|\psi_2}(a,b,z)>0$.  Under the freedom of choice
assumption and assuming compatibility with quantum theory (note that
$P^N_{X|ab\psi_2}(x)=P_{\bar{X}}(x)=\frac{1}{2}$ for both $x=0$ and
$x=1$) this implies $P_{X|az}=P^N_{X|a\psi_2}$ for all $a$ and $z$
with $P_{Z|a}(z)>0$.  This means that $Z$ gives no additional
information about the measurement outcome, $X$.

Taking Parts~I and~II together, we obtain the following lemma, which
may be of independent interest.

\begin{lemma}[No higher theories give improved predictions for
  pairs of maximally entangled qubits] \label{thm_entanglementpredict} 
  For any $\delta>0$ there exists an $N \in \mathbb{N}$ such that for
  any RVs $A$, $B$, $X$, $Y$ and $Z$, at least one of the following
  three conditions cannot hold:
\begin{itemize}
  \item \emph{Freedom of choice:} $A$ and $B$ are free with respect to
    any of the causal orders depicted in
    Figure~\ref{fig:chronology};
  \item \emph{Compatibility with quantum theory:} $P_{XY|ABZ}$ is
    compatible with $P_{XY|AB\psi_2}^N$;\footnote{Note that this
      condition is (by definition) only satisfied if $P_A$ and $P_B$
      have full support.}
  \item \emph{Improved predictions:} There exists a value $A = a$ such
    that $\langle D(P_{X|az},P_{X|a\psi_2})\rangle_z>\delta$,
    where $\langle\cdot\rangle_z$ denotes the expectation value over
    $z$.
  \end{itemize}
\end{lemma}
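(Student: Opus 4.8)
The plan is to prove the contrapositive: assuming that both the freedom-of-choice and the compatibility-with-quantum-theory conditions hold, I would show that the improved-predictions condition must fail once $N$ is chosen large enough in terms of $\delta$. The whole argument is essentially a matter of chaining together the two lemmas already established (Lemma~\ref{lem:1} and Lemma~\ref{lem:2}) with the explicit quantum value of $I_N$ from~\eqref{eq_quantcorr}, and then using $I_N\to 0$ to drive the bound below $\delta$.

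First I would invoke freedom of choice. By Lemma~\ref{lem:1} this yields the no-signalling relations $P_{XZ|AB}=P_{XZ|A}$ and $P_{YZ|AB}=P_{YZ|B}$, which are exactly the hypotheses required by Lemma~\ref{lem:2}. As already noted in the proof of that lemma, these relations also give $P_{X|abz}=P_{X|az}$ together with $P_{Z|ab}=P_Z$, so that the measure over which the average $\langle\cdot\rangle_z$ is taken does not depend on $a$ or $b$, and the conditional $P_{X|az}$ appearing in the improved-predictions condition coincides with the $P_{X|abz}$ appearing in Lemma~\ref{lem:2}.

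Next I would use compatibility with quantum theory. Since the measured state is the fixed state $\ket{\psi_2}$, marginalizing the hidden parameter $Z$ out of the joint distribution must reproduce the quantum prediction, i.e.\ $P_{XY|AB}=P^N_{XY|AB\psi_2}$, and hence $I_N(P_{XY|AB})=I_N(P^N_{XY|AB\psi_2})=2N\sin^2\frac{\pi}{4N}\leq\frac{\pi^2}{8N}$ by~\eqref{eq_quantcorr}. Feeding this into Lemma~\ref{lem:2} gives, for every $a$ and $b$,
\begin{align*}
  \bigl\langle D(P_{X|abz},P_{\bar{X}})\bigr\rangle_z\leq\tfrac{1}{2}I_N(P_{XY|AB})\leq\frac{\pi^2}{16N}\, .
\end{align*}
The last ingredient is the quantum-mechanical fact that the reduced state of one half of $\ket{\psi_2}$ is maximally mixed, so that $P^N_{X|a\psi_2}=P_{\bar{X}}$ for every measurement choice $a$. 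Substituting this identity, together with $P_{X|abz}=P_{X|az}$, into the displayed bound yields $\langle D(P_{X|az},P^N_{X|a\psi_2})\rangle_z\leq\frac{\pi^2}{16N}$ for all $a$. Choosing any $N\geq\pi^2/(16\delta)$ then makes the right-hand side at most $\delta$, so that no value $a$ can satisfy $\langle D(P_{X|az},P_{X|a\psi_2})\rangle_z>\delta$; the improved-predictions condition therefore fails, which establishes the contrapositive.

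I expect the only genuinely delicate step to be the bookkeeping around the averaging measures and the marginalization. One must verify that compatibility really forces the observable two-point correlation $P_{XY|AB}$ (after averaging out $Z$) to equal the quantum correlation $P^N_{XY|AB\psi_2}$, so that the explicit value of $I_N$ may legitimately be inserted, and that the no-signalling relations make the $z$-average in Lemma~\ref{lem:2} agree with the $z$-average in the improved-predictions condition (this is where $P_{Z|ab}=P_Z$ is used). Everything else is direct substitution, with the limit $N\to\infty$ (equivalently $I_N\to 0$) supplying the quantitative control needed to meet an arbitrary $\delta$.
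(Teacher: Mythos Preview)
Your proposal is correct and follows essentially the same route as the paper: combine Lemma~\ref{lem:1} (free choice $\Rightarrow$ no-signalling) with Lemma~\ref{lem:2}, insert the quantum value $I_N(P^N_{XY|AB\psi_2})\leq\pi^2/(8N)$ from~\eqref{eq_quantcorr}, and use $P^N_{X|a\psi_2}=P_{\bar X}$ to identify the bound with the quantity in the improved-predictions condition. Your explicit choice of $N$ and your remark that compatibility (with $\Psi$ taking the single value $\psi_2$) forces $P_{XY|AB}=P^N_{XY|AB\psi_2}$ are exactly the bookkeeping the paper leaves implicit when it writes ``Taking Parts~I and~II together, we obtain the following lemma.''
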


Hence, if an alternative theory is compatible with quantum theory and
satisfies the freedom of choice assumption then the third condition
cannot hold, i.e., $\langle D(P_{X|a
  z},P_{X|a\psi_2})\rangle_z\leq\delta$. Since $\delta$ can be
arbitrarily small, this implies that quantum theory is as informative
as the alternative theory.

\subsection{Part~III: Generalization to arbitrary measurements}

The last part of the proof of Theorem~\ref{thm_main} consists of
generalizing Lemma~\ref{thm_entanglementpredict}, which applies to
specific measurements on a maximally entangled state, to measurements
on the general state $\ket{\phi}_{S D}$.  The proof relies on the
concept of embezzling states~\cite{DamHay03}. These are entangled
states that can be used to extract any desired maximally entangled
state locally and without communication. More precisely, we will use
the following lemma, which is implicit in~\cite{DamHay03}.

\begin{lemma} \label{lem_embezzling} For any $\delta > 0$ and for any
  $k \in \mathbb{N}$ there exists a bipartite state
  $\ket{\Gamma^{k}}_{\tilde{S} \tilde{D}}$, the \emph{embezzling
    state}, such that for any $m \leq k$, there exist local
  isometries, $U_m$ and $V_m$, on $\tilde{S}$ and $\tilde{D}$,
  respectively, that perform the transformation
  \begin{align*}
    U_m\otimes V_m: \, \ket{\Gamma^{k}}_{\tilde{S}\tilde{D}}\mapsto\ket{\Gamma^{k}}_{\tilde{S}\tilde{D}}\otimes\ket{\psi_m}_{S'D'}
  \end{align*}
  with fidelity at least $1-\delta$, where $\ket{\psi_m}_{S'D'}:=\frac{1}{\sqrt{m}}\sum_{x=0}^{m-1}\ket{\hat{x}}_{S'}\ket{\hat{x}}_{D'}$ denotes a maximally
  entangled state of two $m$ dimensional systems.
\end{lemma}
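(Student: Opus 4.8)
The plan is to exhibit the van Dam--Hayden embezzling family explicitly and reduce the claim to a single scalar estimate about harmonic sums. First I would define the candidate embezzling state
\[
  \ket{\Gamma^{k}}_{\tilde{S}\tilde{D}} = \frac{1}{\sqrt{H_N}}\sum_{j=1}^{N}\frac{1}{\sqrt{j}}\,\ket{j}_{\tilde{S}}\ket{j}_{\tilde{D}},
  \qquad H_N := \sum_{j=1}^{N}\frac{1}{j},
\]
for an integer $N$ to be fixed at the end (depending only on $k$ and $\delta$). The point of the $1/\sqrt{j}$ amplitudes is that the induced Schmidt distribution is approximately scale invariant: tensoring with a flat (maximally entangled) distribution merely stretches it, so the state looks essentially unchanged after a small amount of entanglement has been extracted.

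The key tool I would invoke is the standard fact that, for two bipartite pure states, the maximum of $|\braket{\phi}{(U\ot V)\psi}|$ over local isometries $U,V$ equals the inner product of their Schmidt amplitude vectors, each sorted in decreasing order (padding the shorter vector with zeros to equal length). Hence it suffices to show that this overlap, computed between the source $\ket{\Gamma^{k}}$ and the target $\ket{\Gamma^{k}}\ot\ket{\psi_m}$, is at least $1-\delta$; the maximizing local isometries then furnish the required $U_m$ and $V_m$, and for pure states this overlap is exactly the fidelity.

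Next I would read off the two sorted amplitude vectors. The source $\ket{\Gamma^k}$ has sorted amplitudes $1/\sqrt{H_N\,\ell}$ for $\ell=1,\dots,N$. The target $\ket{\Gamma^{k}}\ot\ket{\psi_m}$ has amplitudes $1/\sqrt{H_N m\,j}$ for $j=1,\dots,N$, each occurring with multiplicity $m$; sorting these in decreasing order, the $\ell$-th entry is $1/\sqrt{H_N m\lceil \ell/m\rceil}$. Matching the $N$ nonzero source entries against the largest target entries gives the overlap
\[
  F_{\max} = \frac{1}{H_N\sqrt{m}}\sum_{\ell=1}^{N}\frac{1}{\sqrt{\ell\,\lceil \ell/m\rceil}}.
\]
Using $\lceil \ell/m\rceil < (\ell+m)/m$ and then $\sqrt{\ell(\ell+m)}\le \ell+m$, this is bounded below by $\frac{1}{H_N}\sum_{\ell=1}^N \frac{1}{\ell+m} = \frac{H_{N+m}-H_m}{H_N}\ge 1 - \frac{H_m}{H_N}$. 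Since $H_m\le H_k$ for every $m\le k$, choosing $N$ large enough that $H_N\ge H_k/\delta$ makes $F_{\max}\ge 1-\delta$ \emph{uniformly} in $m\le k$, so a single state $\ket{\Gamma^k}$ works for all required $m$, which completes the proof.

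I expect the main obstacle to be purely the overlap estimate: verifying that the sorted tensor-product Schmidt vector has exactly the claimed entries (the bookkeeping of the multiplicities introduced by $\ket{\psi_m}$), and carrying the harmonic-sum bound through so that the error is controlled by $H_m/H_N$ uniformly over all $m\le k$ with one fixed embezzling state. Everything else---the existence of the maximizing local isometries and the identification of the amplitude overlap with the fidelity---is standard and would be stated rather than belabored.
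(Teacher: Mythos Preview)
The paper does not actually prove this lemma: it is stated there as ``implicit in~\cite{DamHay03}'' and then used as a black box. Your proposal supplies precisely the standard van Dam--Hayden construction and the accompanying harmonic-sum estimate, so there is nothing in the paper to compare it against beyond the citation.

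Your argument is correct. The embezzling family $\ket{\Gamma}=\frac{1}{\sqrt{H_N}}\sum_{j=1}^N j^{-1/2}\ket{j}\ket{j}$ is the canonical one, the reduction to the sorted Schmidt-amplitude overlap is the right tool, and the chain of inequalities
\[
F_{\max}=\frac{1}{H_N\sqrt{m}}\sum_{\ell=1}^N\frac{1}{\sqrt{\ell\,\lceil\ell/m\rceil}}
\;\ge\;\frac{1}{H_N}\sum_{\ell=1}^N\frac{1}{\sqrt{\ell(\ell+m)}}
\;\ge\;\frac{1}{H_N}\sum_{\ell=1}^N\frac{1}{\ell+m}
\;=\;\frac{H_{N+m}-H_m}{H_N}
\;\ge\;1-\frac{H_m}{H_N}
\]
is clean and yields the desired uniform bound over $m\le k$ once $N$ is chosen with $H_N\ge H_k/\delta$. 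One small remark: be explicit about which convention for ``fidelity'' you are using (the paper does not fix one); your overlap $F_{\max}$ is $|\langle\text{target}|(U\otimes V)|\text{source}\rangle|$, so if the intended fidelity is the squared overlap you should either square the bound or halve $\delta$ at the outset. Either way the conclusion stands.
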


Note that the state $\ket{\phi}_{S D}$ considered in
Theorem~\ref{thm_main} can be represented by its Schmidt decomposition
as
\begin{align*}
  \ket{\phi}_{S D} =  \sum_{\hat{y}}\sqrt{p_{\hat{y}}} \ket{\hat{y}}_S \ot
  \ket{\hat{y}}_D \ .
\end{align*}
We now consider an embezzling state on $\tilde{S}\tilde{D}$ and use
Lemma~\ref{lem_embezzling} to define isometries $\hat{U}$ and
$\hat{V}$ on $S \tilde{S}$ and $D \tilde{D}$, respectively, which are
controlled by the entry $\hat{y}$ in the registers $S$ or $D$, and
build up entanglement between registers $S'$ and $D'$, i.e.,
\begin{align*}
  \hat{U} & = \sum_{\hat{y}} 
  \proj{\hat{y}}_S \otimes U_{m(\hat{y})}\\
  \hat{V} & = \sum_{\hat{y}}  
  \proj{\hat{y}}_D \otimes V_{m(\hat{y})} \ .
\end{align*}
The integers $m(\hat{y})$ are chosen such that the state resulting
from applying $\hat{U} \otimes \hat{V}$ to ${\ket{\phi}_{S D}
  \otimes \ket{\Gamma^k}_{\tilde{S} \tilde{D}}}$ is close to a state
of the form
\begin{align*}
  \Bigl(2^{-{r/2}}\sum_{\hat{y}}\sum_{\hat{y}'=0}^{m(\hat{y})-1}\ket{\hat{y},\hat{y}'}_{SS'}\otimes\ket{\hat{y},\hat{y}'}_{DD'}\Bigr)\otimes\ket{\Gamma^k}_{\tilde{D}\tilde{S}}
    \, ,
\end{align*}
with $\sum_{\hat{y}}m(\hat{y})=2^r$, for some integer $r$. (This can
be achieved to arbitrary precision for sufficiently large $k$ and
$m(\hat{y})$.)  Note that the first part of this state corresponds to
$r$ maximally entangled pairs, $\ket{\psi_2}^{\otimes r}$, between the
registers $S S'$ and $D D'$.  \footnote{As a simple example, consider
  the state
  $\ket{\phi}_{SD}=\frac{1}{2}\ket{\hat{0}}_S\ket{\hat{0}}_D+\frac{\sqrt{3}}{2}\ket{\hat{1}}_S\ket{\hat{1}}_D$.
In this case we would take $m(0)=1$ and $m(1)=3$ to yield a state of
the form
$\frac{1}{2}(\ket{\hat{0}\hat{0}}_{SS'}\ket{\hat{0}\hat{0}}_{DD'}+\ket{\hat{1}\hat{0}}_{SS'}\ket{\hat{1}\hat{0}}_{DD'}+\ket{\hat{1}\hat{1}}_{SS'}\ket{\hat{1}\hat{1}}_{DD'}+\ket{\hat{1}\hat{2}}_{SS'}\ket{\hat{1}\hat{2}}_{DD'})$
after the transformation.}
We now construct the POVMs $\{E_x^a\}$
and $\{F_y^b\}$ by concatenating the operations $\hat{U}$ and
$\hat{V}$ with the projective measurements along the vectors
$\ket{(\frac{a_i}{2N}+x_i)\pi}$ and $\ket{(\frac{b_i}{2N} + y_i)\pi}$
introduced in Section~\ref{sec_CB}. More precisely, we define
\begin{align*}
  E_x^a & := \hat{U}^{\dagger}\cdot \Bigl[\Bigl( \bigotimes_{i=1}^r \proj{(\tfrac{a_i}{2N}+x_i)\pi}
  \Bigr)_{D D'} \! \! \! \! \!   \otimes \id_{\tilde{D}} \Bigr] \cdot \hat{U}
  \\
  F_y^b & := \hat{V}^{\dagger}\cdot \Bigl[ \Bigl( \bigotimes_{i=1}^r \proj{(\tfrac{b_i}{2N} + y_i)\pi}
  \Bigr)_{S S'}  \! \! \! \! \!  \otimes \id_{\tilde{S}} \Bigr]  \cdot \hat{V}
\end{align*}
with $a = (a_1, \ldots, a_r) \in \cA_N^{\times r}$ and $b= (b_1,
\ldots, b_r) \in \cB_N^{\times r} $, for some large $N$. In addition
we define
\begin{align*}
  F_y^{b_0} = \proj{\hat{y}}_{S} \otimes \id_{\tilde{S}} \ .
\end{align*}

Assume now that the freedom of choice as well as the compatibility
with quantum theory assumption are satisfied. Furthermore, let $X=
(\hat{X}, \hat{X}')$ and $Y =\hat{Y}$ be the outcomes of the
measurements $A = a_0 := (0, \ldots, 0)$ and $B = b_0$, respectively.
By choosing the orientation of the vectors $\ket{\up}$ and $\ket{\dn}$
of Section~\ref{sec_CB} appropriately, we can arrange it so that
quantum theory predicts that the outcomes of the measurements of $a_0$
and $b_0$ are in agreement, in the sense that $\hat{X} = Y$ holds with
probability $1$. Hence, together with the no-signalling conditions
(c.f.\ Lemma~\ref{lem:1}) we find that
\begin{align*}
 P_{Y|b_0(\psi \otimes \Gamma)} & = P_{\hat{X}|a_0(\psi \otimes \Gamma)} \\
  P_{Y|b_0 Z} & = P_{\hat{X}|a_0 Z}
\end{align*}
Lemma~\ref{thm_entanglementpredict} implies that, $P_{X|a_0(\psi
  \otimes \Gamma)}$ must be as informative as $P_{X|a_0 Z}$. In
particular, the same relation holds for the marginals of these
distributions, i.e, $P_{\hat{X}|a_0(\psi \otimes \Gamma)} $ is as
informative as $P_{\hat{X}|a_0 Z}$. Combining this with the above
identities we find that $ P_{Y | b_0 \psi } = P_{Y | b_0(\psi \otimes
  \Gamma)} $ is as informative as $P_{Y|b_0 Z}$, thus concluding the
proof of Theorem~\ref{thm_main}.

\section{Alternative theories are equivalent to quantum
  theory} \label{sec_equivalence}

In this section, we discuss an implication of the non-extendibility
theorem (Theorem~\ref{thm_main}) to a long-standing debate on the
nature of the quantum mechanical wave function.  The debate centres
around whether it should be interpreted as a subjective quantity, for
example a state of knowledge about some underlying physical reality,
or whether it should instead be interpreted as objective
(real).\footnote{Note that in some subjective interpretations
  (e.g.~\cite{CFS}) there is no underlying physical reality|the wave
  function is simply a state of knowledge about future measurement
  outcomes and nothing more.}
The wave function could be considered subjective if there existed an
alternative theory, with predictions based on a parameter $Z$, that is
at least as informative as quantum theory, and in which two different
wave functions, say $\psi$ and $\psi'$, are compatible with the same
value of the parameter, say $Z=z$.  Formally, this would mean that
there exist $z$, $\psi$, and $\psi'\neq\psi$ such that
$P_{Z\Psi}(z,\psi)>0$ and $P_{Z\Psi}(z,\psi')>0$.  This is sometimes
called a $\psi$-epistemic view of the wave function and contrasts with
the $\psi$-ontic, or objective, view~\cite{HarSpek} (we refer
to~\cite{CFS,Spekkens,LeiSpek} for arguments in favour of the
$\psi$-epistemic view).  In the latter, the wave function is uniquely
determined by the parameters of any alternative theory that is at
least as informative as quantum theory, i.e., there exists a
(deterministic) function, $f$ such that $\Psi=f(Z)$.

Our result is based on the following simple lemma, which asserts that,
if an alternative theory is equally informative as quantum theory then
the wave function is indeed uniquely determined by the parameter of
the alternative theory.

\begin{lemma}\label{lem:wavefn} Suppose $\{E^a_x\}$ form a
  tomographically complete set of POVMs, and $A$, $X$, $\Psi$ and $Z$
  are RVs such that:
  \begin{itemize}
  \item $A$ is a free choice with respect to a causal order in which
    $A\nbef Z$ and $A\nbef\Psi$.
  \item $P_{X|AZ}$ is at least as informative as $P_{X|A\Psi}$, where
    $P_{X|a\psi}(x)= \tr(E^a_x\psi)$.
  \item $P_{X|A\Psi}$ is at least as informative as
  $P_{X|AZ}$.
\end{itemize}
Then there exists a function, $f$, such that $\Psi=f(Z)$.
\end{lemma}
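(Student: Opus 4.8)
The plan is to reduce both informativeness hypotheses, on the joint support of $(Z,\Psi)$, to the single identity $P_{X|az}=P_{X|a\psi}$, and then to invoke tomographic completeness to forbid two distinct wave functions from being compatible with the same value of $Z$. The first step is to exploit freedom of choice. Since $A\nbef Z$ and $A\nbef\Psi$, both $Z$ and $\Psi$ belong to $\Gamma_A$, so Definition~\ref{def:free} gives $P_{AZ\Psi}=P_A\times P_{Z\Psi}$. Hence $P_{Z\Psi|a}=P_{Z\Psi}$ for every $a$ with $P_A(a)>0$, and likewise $P_{AZ}=P_A\times P_Z$. The important consequence is that the support of $(Z,\Psi)$ does not depend on the measurement choice $a$, and that for each $z$ in the support of $P_Z$ the conditional $P_{X|az}$ is defined for every $a$ in the tomographically complete family (which we take to have positive probability under $P_A$, as needed for the predictions to exist). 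This $a$-independence is exactly what will let me apply the whole family $\{E^a_x\}$ to one fixed pair $(z,\psi)$ in the final step.

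Next I would unpack the two ``as informative as'' conditions, reading them off the genuine joint distribution $P_{XZ\Psi|A}$ of the given random variables, which serves as the realizing distribution in the definition. That $P_{X|AZ}$ is at least as informative as $P_{X|A\Psi}$ means $P_{X|az}=P_{X|az\psi}$ whenever $(z,\psi)$ is in the support, i.e.\ $X$ is conditionally independent of $\Psi$ given $(A,Z)$; that $P_{X|A\Psi}$ is at least as informative as $P_{X|AZ}$ means $P_{X|a\psi}=P_{X|az\psi}$, i.e.\ $X$ is conditionally independent of $Z$ given $(A,\Psi)$. Chaining these through the common quantity $P_{X|az\psi}$ yields
\begin{align*}
  P_{X|az}(x)=P_{X|a\psi}(x)=\tr(E^a_x\psi)
\end{align*}
for all $x$, all measurement choices $a$, and all $(z,\psi)$ in the support of $P_{Z\Psi}$, where the last equality is the assumed Born-rule form of the quantum prediction.

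Finally I would run the tomographic argument. Fix $z$ in the support of $P_Z$ and suppose $(z,\psi)$ and $(z,\psi')$ both lie in the support of $P_{Z\Psi}$. By the $a$-independence of the support, the displayed identity holds for this $z$ across every $a$, so $\tr(E^a_x\psi)=P_{X|az}(x)=\tr(E^a_x\psi')$ for all $a$ and $x$. Since $\{E^a_x\}$ is tomographically complete, these values determine the state uniquely, forcing $\psi=\psi'$. Thus each $z$ is compatible with a single wave function, which defines a map $f$ with $\Psi=f(Z)$ holding with probability one. I expect the main obstacle to be not any single computation but the book-keeping around conditional supports: one must verify that both informativeness conditions are realized by a \emph{common} joint distribution, and that freedom of choice renders the support of $(Z,\Psi)$ independent of $a$, since it is precisely this independence that permits probing a single pair $(z,\psi)$ with the entire tomographically complete family.
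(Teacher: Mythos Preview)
Your proposal is correct and follows essentially the same route as the paper: derive $P_{AZ\Psi}=P_A\times P_{Z\Psi}$ from freedom of choice, chain the two informativeness conditions through $P_{X|az\psi}$ to obtain $P_{X|az}=P_{X|a\psi}$ on the support of $P_{Z\Psi}$, and then invoke tomographic completeness to conclude that each $z$ determines a unique $\psi$. The book-keeping concern you flag about realizing both informativeness conditions via a \emph{common} joint distribution is handled in the paper exactly as you suggest, by taking the actual joint distribution of the given RVs (the paper simply drops the bar on $\bar{P}$ after introducing it).
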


\begin{proof}
  If $P_{X|AZ}$ is at least as informative as $P_{X|A\Psi}$, then
  there exists a distribution $\bar{P}_{XZ\Psi|A}$ such that 
\begin{align*}
    P_{X|a\psi} & = \sum_{z} \bar{P}_{X Z|a \psi}(\cdot, z) \ \
    \text{$\forall\ a,\psi$}\\
      P_{X|a z} & = \sum_{\psi} \bar{P}_{X \Psi|a z}(\cdot, \psi) \ \
      \text{$\forall\ a, z$}.
  \end{align*}
  (we drop the bar on $P$ in the following, and simply use
  $P_{XZ\Psi|A}$ to denote this distribution).  We
  have $$P_{X|az\psi}=P_{X|az}$$ for all $a,z,\psi$ that have a
  non-zero joint probability, i.e., $P_{AZ\Psi}(a,z,\psi)>0$.
  Likewise, if $P_{X|A\Psi}$ is at least as informative as $P_{X|AZ}$
  then $$P_{X|az\psi}=P_{X|a\psi}$$ holds under the same condition.
  Combining these expressions gives
\begin{align}  \label{eq:consist}
  P_{X|a\psi}=P_{X|az} \, .
\end{align}
If $A$ is a free choice, we have $P_{AZ\Psi}=P_A\times P_{Z\Psi}$,
hence~\eqref{eq:consist} holds provided that $P_{Z\Psi}(z,\psi)>0$ and
$P_A(a)>0$.

Let now $z$, $\psi$ and $\psi'$ be such that $P_{Z\Psi}(z,\psi)>0$ and
$P_{Z\Psi}(z,\psi')>0$.  From~\eqref{eq:consist}, this implies
$P_{X|a\psi} = P_{X|a\psi'}$ for all $a$ such that $P_A(a)>0$.  Since
the set of measurements with $P_A(a)>0$ is tomographically complete,
this can only be satisfied if $\psi=\psi'$.  It hence follows that
there exists a function $f$ such that $\Psi=f(Z)$.
\end{proof}

Combining Theorem~\ref{thm_main} with Lemma~\ref{lem:wavefn}, we can
establish the main result of this section, which we state informally
as follows.\smallskip

\begin{claim}~\cite{CR_wavefn} \label{claim:2} In any alternative
  theory that is at least as informative as quantum theory and
  compatible with free choice (with respect to the discussed
  causal orders), there is a one-to-one correspondence between the
  parameters of the alternative theory and the quantum state (up to a
  possible removable degeneracy\footnote{Any degeneracy is
    \emph{removable} in the sense that it has no operational effect,
    i.e., one can define another theory without the degeneracy (but
    otherwise identical) without affecting the predictive power.}  in
  the parameters of the alternative theory).
\end{claim}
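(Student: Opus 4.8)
The plan is to derive Claim~\ref{claim:2} by feeding the conclusion of Theorem~\ref{thm_main} into Lemma~\ref{lem:wavefn}, which needs \emph{both} informativeness directions between $P_{X|AZ}$ and $P_{X|A\Psi}$. One direction is handed to us by hypothesis: an alternative theory that is ``at least as informative as quantum theory'' means precisely that $P_{X|AZ}$ is at least as informative as $P_{X|A\Psi}$. First I would fix a tomographically complete set of projective measurements $\{E^a_x\}$ on the system, so that the completeness hypothesis of Lemma~\ref{lem:wavefn} is available. The substantive work is to obtain the reverse relation, that quantum theory is at least as informative as the alternative theory, and this I would extract from Theorem~\ref{thm_main} exactly as in the discussion following that theorem.

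Concretely, for each measurement $\hat{a}$ in the tomographically complete family I would purify the post-measurement state to $\ket{\phi}_{SD}$ as in~\eqref{eq_entaftermeas}, let $\{F_y^{b_0}\}$ read out the device register $D$ so that its outcome $Y$ is a faithful copy of the original outcome $\hat{X}$, and then apply Theorem~\ref{thm_main} to $\ket{\phi}_{SD}$. Since free choice holds, Lemma~\ref{lem:1} gives no-signalling, so the predictions about $Y$ coincide with those about $\hat{X}$ for both theories; compatibility and free choice then exclude the ``improved predictions'' clause, yielding that $P_{\hat{X}|\hat{a}\Psi}$ is as informative as $P_{\hat{X}|\hat{a}Z}$. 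Running this for every $\hat{a}$ in the family establishes that $P_{X|A\Psi}$ is at least as informative as $P_{X|AZ}$. With both directions in hand, and with the causal order arranged so that $A\nbef Z$ and $A\nbef\Psi$, I would invoke Lemma~\ref{lem:wavefn} directly to obtain a function $f$ with $\Psi=f(Z)$.

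It then remains to upgrade $\Psi=f(Z)$ to a one-to-one correspondence up to removable degeneracy. Surjectivity onto the states that actually occur (those $\psi$ with $P_\Psi(\psi)>0$) is immediate, since every such $\psi$ equals $f(z)$ for some $z$ in the support. For injectivity I would argue that any failure is a removable degeneracy: if $z\neq z'$ satisfy $f(z)=f(z')=\psi$, then by~\eqref{eq:consist} we have $P_{X|az}=P_{X|a\psi}=P_{X|az'}$ for every measurement $a$ in the tomographically complete set, so $z$ and $z'$ are operationally indistinguishable and may be merged into a single equivalence class without altering any prediction. Identifying $Z$ with these equivalence classes turns $f$ into a genuine bijection, giving the asserted correspondence.

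I expect the main obstacle to be the reverse informativeness step, i.e.\ applying Theorem~\ref{thm_main} \emph{uniformly} across a whole tomographically complete family rather than for a single measurement. The care needed is in passing through the device copy $Y$ of $\hat{X}$ and in checking that the no-signalling consequence of free choice (Lemma~\ref{lem:1}) lines up consistently for each $\hat{a}$, so that the ``improved predictions'' clause is ruled out in every case; the degeneracy argument itself is then a routine consequence of~\eqref{eq:consist}.
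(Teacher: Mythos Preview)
Your proposal is correct and follows essentially the same approach as the paper: use Theorem~\ref{thm_main} (via the device-copy argument that follows it) to obtain the reverse informativeness direction, feed both directions into Lemma~\ref{lem:wavefn} to get $\Psi=f(Z)$, and then observe that any non-injectivity of $f$ is a removable degeneracy because all $z\in f^{-1}(\psi)$ give identical predictions. The paper's own argument is terser but structurally identical; your version simply unpacks the invocation of Theorem~\ref{thm_main} across the tomographically complete family and grounds the degeneracy step explicitly in~\eqref{eq:consist}.
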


To establish this, as before, we use $Z$ to denote the parameters of
the higher theory.  Theorem~\ref{thm_main} shows that under the
freedom of choice assumption, quantum theory is at least as
informative as any alternative theory.  We hence satisfy the
conditions of Lemma~\ref{lem:wavefn} so find $\Psi=f(Z)$, for some
function $f$.  Furthermore, since $Z$ cannot improve the predictions
for any $\Psi=\psi$, any $z$ in $f^{-1}(\psi)$ must give identical
predictions. Hence, if $f^{-1}(\psi)$ contains more than one element,
this corresponds to a removable degeneracy in the parameters of the
alternative theory.

\subsection*{Related work}
An interpretation of the wave function as a subjective state of
knowledge about some underlying theory has also been ruled out by
Pusey \emph{et al.}~\cite{PBR} via a different argument using
different assumptions which we now summarize.  They consider the
preparation of multiple quantum systems, with states $\Psi_i$, where
each system is associated with a particular parameter in the higher
theory, $Z_i$.  Pusey \emph{et al.}\ assume that the joint
distribution of these is product, i.e.
\begin{equation}\label{eq:prod}
P_{Z_1Z_2\ldots\Psi_1\Psi_2\ldots}=P_{Z_1\Psi_1}\times
P_{Z_2\Psi_2}\times\ldots\, .
\end{equation}
Starting from this assumption, they show that there cannot exist two
distinct states, $\psi$ and $\psi'$, such that for each $i$ there
exists a value of $Z_i=z_i$ satisfying $P_{Z_i\Psi_i}(z_i,\psi)>0$ and
$P_{Z_i\Psi_i}(z_i,\psi')>0$.

We note that the product nature of the joint distribution,
Eq.~\eqref{eq:prod}, is related to free choice of preparation.  In
particular, it implies
\begin{align*}
  P_{\Psi_1Z_2\ldots Z_N\Psi_2\ldots\Psi_N}=P_{\Psi_1}\times
  P_{Z_2\ldots Z_N\Psi_2\ldots\Psi_N}\, .
\end{align*}
If we take the causal order to be such that $\Psi_i\nbef\Psi_j$
and $\Psi_i\nbef Z_j$ for $j\neq i$ (as would be natural if we make
space-like separated preparations), then this is equivalent to saying that
$\Psi_1$ can be chosen freely.

It was subsequently noted~\cite{Hall_PBR} that the separability
assumption can be weakened, in essence to the assumption that there
exists a particular set of parameters in the higher theory that are
compatible with every product state composed of $\psi$ and $\psi'$,
i.e., there exist values of the parameters, $z_1,\ldots,z_N$, such
that
$$P_{Z_1\ldots Z_N\Psi_1\ldots\Psi_N}(z_1,\ldots,z_N,\psi^{(}\phantom{}'\phantom{}^{)},\ldots,\psi^{(}\phantom{}'\phantom{}^{)})>0\, ,$$
where each $\psi^{(}\phantom{}'\phantom{}^{)}$ is independently either
$\psi$ or $\psi'$ (so that the above represents $2^N$ conditions).  This
condition can be further weakened~\cite{SF} such that the parameters
of the alternative theory for multiple systems need not be made up
only of the individual parts, but could be replaced or supplemented
with global parameters (provided these are also compatible with all
the product state preparations).

An alternative argument against an interpretation of the quantum state
as a state of knowledge about an underlying reality can be found
in~\cite{Hardy_PBR}.

We remark that some models in which the wave function is subjective
have been developed for restricted scenarios.  For example, by
modifying an earlier model by Bell~\cite{Bell_ch1}, Lewis \emph{et
  al.}\ constructed a model for a single qubit in which the wave
function is subjective, and extended that model to arbitrary
dimensions~\cite{LJBR}.  These models are not in conflict with
Claim~\ref{claim:2} because they treat only single systems, and cannot
be extended to bipartite scenarios while allowing for free choice with
respect to one of the causal orders of Figure~\ref{fig:chronology}.

\section{Discussion} \label{sec_discussion}

The main statements described in this \chapter{} about the
completeness of quantum theory are based on two assumptions. One of
them is that quantum theory is correct, and is implicit in the
question of completeness. The other is that of free choice within a
natural causal order.  It is worth commenting on the existence of
alternative models that are not compatible with this assumption.

A prominent example is the de Broglie-Bohm model~\cite{deBroglie,Bohm}
which recreates quantum correlations, providing higher explanation in
the form of hidden particle positions.  These can be thought of as
parameters of a higher theory that would allow perfect predictions of
the outcomes.  However, introducing these parameters comes at a price:
it is incompatible with the freedom of choice assumption of our
theorems. In fact, for the bipartite setting discussed above, if $Z$
includes the particle positions of the de Broglie-Bohm model, we have
some non-local behaviour, so that $P_{X|abz}=P_{X|az}$, for instance,
does not hold.  Thus, given Lemma~\ref{lem:1}, it follows that $A$ and
$B$ cannot be free choices with respect to any of the causal orders of
Figure~\ref{fig:chronology}.

There are at least two ways to avoid our conclusions.  The first is to
maintain free choice, but assume that the alternative theory has a
different causal order (in particular, one in which either $A\nbef Y$ or
$B\nbef X$ does not hold).  The second is to consider alternative
theories without free choice, in which the measurement settings $A$
and $B$ may depend on the additional parameters $Z$ (sometimes, this
view is argued for by imagining that the additional parameters are
permanently hidden).

One may take the view that the freedom of choice assumption, which
demands complete independence between the chosen settings and the
other variables, is relatively strong, and perhaps contemplate
alternative theories where this assumption is weakened.  Some results
in this direction can be found in~\cite{CR_free}, where a theorem
similar to Lemma~\ref{thm_entanglementpredict} is established under
a relaxed free choice assumption, and provided there is no signalling
at the level of the underlying theory.

Finally, we note that the result presented here has a generic
application in quantum cryptography.  Standard security proofs for
schemes such as quantum key distribution~\cite{BB84,Ekert} are based
on the assumption (usually not stated explicitly) that quantum theory
is complete.  If this were not the case, it could be that a scheme is
proven secure within quantum theory, yet an adversary can break it by
exploiting information available in a higher theory. However, the
non-extendibility theorem, Theorem~\ref{thm_main}, implies that it is
sufficient to make only the weaker assumption that quantum theory is
correct, since this implies completeness.

\acknowledgments 

We are grateful to Gijs Leegwater and Robert Spekkens for helpful
comments on an earlier draft. This work was supported by the Swiss
National Science Foundation (via grants PP00P2-128455, 20CH21-138799
(CHIST-ERA project CQC) and 200020-135048, via the Swiss National
Centre of Competence in Research ``Quantum Science and Technology''
and via the CHIST-ERA project DIQIP), by the European Research Council
(via grant 258932), by the German Science Foundation (via grant CH
843/2-1) and by the Swiss State Secretariat for Education and Research
supporting COST action MP1006.


\begin{thebibliography}{10}
\expandafter\ifx\csname url\endcsname\relax
  \def\url#1{\texttt{#1}}\fi
\expandafter\ifx\csname urlprefix\endcsname\relax\def\urlprefix{URL }\fi
\providecommand{\bibinfo}[2]{#2}
\providecommand{\eprint}[2][]{\url{#2}}

\bibitem{EPR}
\bibinfo{author}{Einstein, A.}, \bibinfo{author}{Podolsky, B.} \&
  \bibinfo{author}{Rosen, N.}
\newblock \bibinfo{title}{Can quantum-mechanical description of physical
  reality be considered complete?}
\newblock \emph{\bibinfo{journal}{Physical Review}}
  \textbf{\bibinfo{volume}{47}}, \bibinfo{pages}{777--780}
  (\bibinfo{year}{1935}).

\bibitem{KS}
\bibinfo{author}{Kochen, S.} \& \bibinfo{author}{Specker, E.~P.}
\newblock \bibinfo{title}{The problem of hidden variables in quantum
  mechanics}.
\newblock \emph{\bibinfo{journal}{Journal of Mathematics and Mechanics}}
  \textbf{\bibinfo{volume}{17}}, \bibinfo{pages}{59--87}
  (\bibinfo{year}{1967}).

\bibitem{Bell_KS}
\bibinfo{author}{Bell, J.~S.}
\newblock \bibinfo{title}{On the problem of hidden variables in quantum
  mechanics}.
\newblock In \emph{\bibinfo{booktitle}{Speakable and unspeakable in quantum
  mechanics}}, chap.~\bibinfo{chapter}{1} (\bibinfo{publisher}{Cambridge
  University Press}, \bibinfo{year}{1987}).

\bibitem{Bell}
\bibinfo{author}{Bell, J.~S.}
\newblock \bibinfo{title}{On the {E}instein-{P}odolsky-{R}osen paradox}.
\newblock In \emph{\bibinfo{booktitle}{Speakable and unspeakable in quantum
  mechanics}}, chap.~\bibinfo{chapter}{2} (\bibinfo{publisher}{Cambridge
  University Press}, \bibinfo{year}{1987}).

\bibitem{CR_ext}
\bibinfo{author}{Colbeck, R.} \& \bibinfo{author}{Renner, R.}
\newblock \bibinfo{title}{No extension of quantum theory can have improved
  predictive power}.
\newblock \emph{\bibinfo{journal}{Nature Communications}}
  \textbf{\bibinfo{volume}{2}}, \bibinfo{pages}{411} (\bibinfo{year}{2011}).

\bibitem{CR_wavefn}
\bibinfo{author}{Colbeck, R.} \& \bibinfo{author}{Renner, R.}
\newblock \bibinfo{title}{Is a system's wave function in one-to-one
  correspondence with its elements of reality?}
\newblock \emph{\bibinfo{journal}{Physical Review Letters}}
  \textbf{\bibinfo{volume}{108}}, \bibinfo{pages}{150402}
  (\bibinfo{year}{2012}).

\bibitem{Pearle}
\bibinfo{author}{Pearle, P.~M.}
\newblock \bibinfo{title}{Hidden-variable example based upon data rejection}.
\newblock \emph{\bibinfo{journal}{Physical Review D}}
  \textbf{\bibinfo{volume}{2}}, \bibinfo{pages}{1418--1425}
  (\bibinfo{year}{1970}).

\bibitem{BC}
\bibinfo{author}{Braunstein, S.~L.} \& \bibinfo{author}{Caves, C.~M.}
\newblock \bibinfo{title}{Wringing out better {B}ell inequalities}.
\newblock \emph{\bibinfo{journal}{Annals of Physics}}
  \textbf{\bibinfo{volume}{202}}, \bibinfo{pages}{22--56}
  (\bibinfo{year}{1990}).

\bibitem{CHSH}
\bibinfo{author}{{Clauser}, J.~F.}, \bibinfo{author}{{Horne}, M.~A.},
  \bibinfo{author}{{Shimony}, A.} \& \bibinfo{author}{{Holt}, R.~A.}
\newblock \bibinfo{title}{Proposed experiment to test local hidden-variable
  theories}.
\newblock \emph{\bibinfo{journal}{Physical Review Letters}}
  \textbf{\bibinfo{volume}{23}}, \bibinfo{pages}{880--884}
  (\bibinfo{year}{1969}).

\bibitem{Bell_free}
\bibinfo{author}{Bell, J.~S.}
\newblock \bibinfo{title}{Free variables and local causality}.
\newblock In \emph{\bibinfo{booktitle}{Speakable and unspeakable in quantum
  mechanics}}, chap.~\bibinfo{chapter}{12} (\bibinfo{publisher}{Cambridge
  University Press}, \bibinfo{year}{1987}).

\bibitem{Leggett}
\bibinfo{author}{Leggett, A.~J.}
\newblock \bibinfo{title}{Nonlocal hidden-variable theories and quantum
  mechanics: An incompatibility theorem}.
\newblock \emph{\bibinfo{journal}{Foundations of Physics}}
  \textbf{\bibinfo{volume}{33}}, \bibinfo{pages}{1469--1493}
  (\bibinfo{year}{2003}).

\bibitem{Bell_nouvelle}
\bibinfo{author}{Bell, J.~S.}
\newblock \bibinfo{title}{La nouvelle cuisine}.
\newblock In \emph{\bibinfo{booktitle}{Speakable and unspeakable in quantum
  mechanics}}, chap.~\bibinfo{chapter}{24} (\bibinfo{publisher}{Cambridge
  University Press}, \bibinfo{year}{2004}), \bibinfo{edition}{2nd} edn.

\bibitem{BBGKLLS}
\bibinfo{author}{Branciard, C.} \emph{et~al.}
\newblock \bibinfo{title}{Testing quantum correlations versus single-particle
  properties within {L}eggett's model and beyond}.
\newblock \emph{\bibinfo{journal}{Nature Physics}}
  \textbf{\bibinfo{volume}{4}}, \bibinfo{pages}{681--685}
  (\bibinfo{year}{2008}).

\bibitem{GPKBZAZ}
\bibinfo{author}{Gr\"oblacher, S.} \emph{et~al.}
\newblock \bibinfo{title}{An experimental test of non-local realism}.
\newblock \emph{\bibinfo{journal}{Nature}} \textbf{\bibinfo{volume}{446}},
  \bibinfo{pages}{871--875} (\bibinfo{year}{2007}).

\bibitem{ColbeckRenner}
\bibinfo{author}{Colbeck, R.} \& \bibinfo{author}{Renner, R.}
\newblock \bibinfo{title}{Hidden variable models for quantum theory cannot have
  any local part}.
\newblock \emph{\bibinfo{journal}{Physical Review Letters}}
  \textbf{\bibinfo{volume}{101}}, \bibinfo{pages}{050403}
  (\bibinfo{year}{2008}).

\bibitem{Hardy_ontbag}
\bibinfo{author}{Hardy, L.}
\newblock \bibinfo{title}{Quantum ontological excess baggage}.
\newblock \emph{\bibinfo{journal}{Studies In History and Philosophy of Modern
  Physics}} \textbf{\bibinfo{volume}{35}}, \bibinfo{pages}{267--276}
  (\bibinfo{year}{2004}).

\bibitem{Montina3}
\bibinfo{author}{Montina, A.}
\newblock \bibinfo{title}{Exponential complexity and ontological theories of
  quantum mechanics}.
\newblock \emph{\bibinfo{journal}{Physical Review A}}
  \textbf{\bibinfo{volume}{77}}, \bibinfo{pages}{022104}
  (\bibinfo{year}{2008}).

\bibitem{Montina}
\bibinfo{author}{Montina, A.}
\newblock \bibinfo{title}{State-space dimensionality in short-memory
  hidden-variable theories}.
\newblock \emph{\bibinfo{journal}{Physical Review A}}
  \textbf{\bibinfo{volume}{83}}, \bibinfo{pages}{032107}
  (\bibinfo{year}{2011}).

\bibitem{ChenMontina}
\bibinfo{author}{Chen, Z.} \& \bibinfo{author}{Montina, A.}
\newblock \bibinfo{title}{Measurement contextuality is implied by macroscopic
  realism}.
\newblock \emph{\bibinfo{journal}{Physical Review A}}
  \textbf{\bibinfo{volume}{83}}, \bibinfo{pages}{042110}
  (\bibinfo{year}{2011}).

\bibitem{Spekkens_context}
\bibinfo{author}{Spekkens, R.~W.}
\newblock \bibinfo{title}{Contextuality for preparations, transformations, and
  unsharp measurements}.
\newblock \emph{\bibinfo{journal}{Physical Review A}}
  \textbf{\bibinfo{volume}{71}}, \bibinfo{pages}{052108}
  (\bibinfo{year}{2005}).

\bibitem{BHK}
\bibinfo{author}{Barrett, J.}, \bibinfo{author}{Hardy, L.} \&
  \bibinfo{author}{Kent, A.}
\newblock \bibinfo{title}{No signalling and quantum key distribution}.
\newblock \emph{\bibinfo{journal}{Physical Review Letters}}
  \textbf{\bibinfo{volume}{95}}, \bibinfo{pages}{010503}
  (\bibinfo{year}{2005}).

\bibitem{BKP}
\bibinfo{author}{Barrett, J.}, \bibinfo{author}{Kent, A.} \&
  \bibinfo{author}{Pironio, S.}
\newblock \bibinfo{title}{Maximally non-local and monogamous quantum
  correlations}.
\newblock \emph{\bibinfo{journal}{Physical Review Letters}}
  \textbf{\bibinfo{volume}{97}}, \bibinfo{pages}{170409}
  (\bibinfo{year}{2006}).

\bibitem{DamHay03}
\bibinfo{author}{van Dam, W.} \& \bibinfo{author}{Hayden, P.}
\newblock \bibinfo{title}{Universal entanglement transformations without
  communication}.
\newblock \emph{\bibinfo{journal}{Physical Review A}}
  \textbf{\bibinfo{volume}{67}}, \bibinfo{pages}{060302(R)}
  (\bibinfo{year}{2003}).

\bibitem{CFS}
\bibinfo{author}{Caves, C.~M.}, \bibinfo{author}{Fuchs, C.~A.} \&
  \bibinfo{author}{Schack, R.}
\newblock \bibinfo{title}{Quantum probabilities as {B}ayesian probabilities}.
\newblock \emph{\bibinfo{journal}{Physical Review A}}
  \textbf{\bibinfo{volume}{65}}, \bibinfo{pages}{022305}
  (\bibinfo{year}{2002}).

\bibitem{HarSpek}
\bibinfo{author}{Harrigan, N.} \& \bibinfo{author}{Spekkens, R.~W.}
\newblock \bibinfo{title}{Einstein, incompleteness, and the epistemic view of
  quantum states}.
\newblock \emph{\bibinfo{journal}{Foundations of Physics}}
  \textbf{\bibinfo{volume}{40}}, \bibinfo{pages}{125--157}
  (\bibinfo{year}{2010}).

\bibitem{Spekkens}
\bibinfo{author}{Spekkens, R.~W.}
\newblock \bibinfo{title}{Evidence for the epistemic view of quantum states: A
  toy theory}.
\newblock \emph{\bibinfo{journal}{Physical Review A}}
  \textbf{\bibinfo{volume}{75}}, \bibinfo{pages}{032110}
  (\bibinfo{year}{2007}).

\bibitem{LeiSpek}
\bibinfo{author}{Leifer, M.~S.} \& \bibinfo{author}{Spekkens, R.~W.}
\newblock \bibinfo{title}{Formulating quantum theory as a causally neutral
  theory of {B}ayesian inference}.
\newblock \bibinfo{howpublished}{e-print \url{arXiv:1107.5849}}
  (\bibinfo{year}{2011}).

\bibitem{PBR}
\bibinfo{author}{Pusey, M.~F.}, \bibinfo{author}{Barrett, J.} \&
  \bibinfo{author}{Rudolph, T.}
\newblock \bibinfo{title}{On the reality of the quantum state}.
\newblock \emph{\bibinfo{journal}{Nature Physics}}
  \textbf{\bibinfo{volume}{8}}, \bibinfo{pages}{476--479}
  (\bibinfo{year}{2012}).

\bibitem{Hall_PBR}
\bibinfo{author}{Hall, M. J.~W.}
\newblock \bibinfo{title}{Generalisations of the recent
  {P}usey-{B}arrett-{R}udolph theorem for statistical models of quantum
  phenomena}.
\newblock \bibinfo{howpublished}{e-print \url{arXiv:1111.6304}}
  (\bibinfo{year}{2011}).

\bibitem{SF}
\bibinfo{author}{Schlosshauer, M.} \& \bibinfo{author}{Fine, A.}
\newblock \bibinfo{title}{On a recent quantum no-go theorem}.
\newblock \bibinfo{howpublished}{e-print \url{arXiv:1203.4779}}
  (\bibinfo{year}{2012}).

\bibitem{Hardy_PBR}
\bibinfo{author}{Hardy, L.}
\newblock \bibinfo{title}{Are quantum states real?}
\newblock \bibinfo{howpublished}{e-print \url{arXiv:1205.1439}}
  (\bibinfo{year}{2012}).

\bibitem{Bell_ch1}
\bibinfo{author}{Bell, J.~S.}
\newblock \bibinfo{title}{On the problem of hidden variables in quantum
  mechanics}.
\newblock In \emph{\bibinfo{booktitle}{Speakable and unspeakable in quantum
  mechanics}}, chap.~\bibinfo{chapter}{1} (\bibinfo{publisher}{Cambridge
  University Press}, \bibinfo{year}{1987}).

\bibitem{LJBR}
\bibinfo{author}{Lewis, P.~G.}, \bibinfo{author}{Jennings, D.},
  \bibinfo{author}{Barrett, J.} \& \bibinfo{author}{Rudolph, T.}
\newblock \bibinfo{title}{Distinct quantum states can be compatible with a
  single state of reality}.
\newblock \emph{\bibinfo{journal}{Physical Review Letters}}
  \textbf{\bibinfo{volume}{109}}, \bibinfo{pages}{150404}
  (\bibinfo{year}{2012}).

\bibitem{deBroglie}
\bibinfo{author}{de~Broglie, L.}
\newblock \bibinfo{title}{La m\'ecanique ondulatoire et la structure atomique
  de la mati\`ere et du rayonnement}.
\newblock \emph{\bibinfo{journal}{Journal de Physique, Serie VI}}
  \textbf{\bibinfo{volume}{VIII}}, \bibinfo{pages}{225--241}
  (\bibinfo{year}{1927}).

\bibitem{Bohm}
\bibinfo{author}{Bohm, D.}
\newblock \bibinfo{title}{A suggested interpretation of the quantum theory in
  terms of ``hidden'' variables. {I}}.
\newblock \emph{\bibinfo{journal}{Physical Review}}
  \textbf{\bibinfo{volume}{85}}, \bibinfo{pages}{166--179}
  (\bibinfo{year}{1952}).

\bibitem{CR_free}
\bibinfo{author}{Colbeck, R.} \& \bibinfo{author}{Renner, R.}
\newblock \bibinfo{title}{Free randomness can be amplified}.
\newblock \emph{\bibinfo{journal}{Nature Physics}}
  \textbf{\bibinfo{volume}{8}}, \bibinfo{pages}{450–--454}
  (\bibinfo{year}{2012}).

\bibitem{BB84}
\bibinfo{author}{Bennett, C.~H.} \& \bibinfo{author}{Brassard, G.}
\newblock \bibinfo{title}{Quantum cryptography: Public key distribution and
  coin tossing}.
\newblock In \emph{\bibinfo{booktitle}{Proceedings of IEEE International
  Conference on Computers, Systems, and Signal Processing}},
  \bibinfo{pages}{175--179}. \bibinfo{organization}{IEEE}
  (\bibinfo{publisher}{New York}, \bibinfo{year}{1984}).

\bibitem{Ekert}
\bibinfo{author}{Ekert, A.~K.}
\newblock \bibinfo{title}{Quantum cryptography based on {B}ell's theorem}.
\newblock \emph{\bibinfo{journal}{Physical Review Letters}}
  \textbf{\bibinfo{volume}{67}}, \bibinfo{pages}{661--663}
  (\bibinfo{year}{1991}).

\end{thebibliography}


\appendix

\section{Variational Distance} \label{sec_dist}

The following is a list of the main properties of the variational
distance $D(\cdot, \cdot)$  used in this work:
\begin{itemize}
\item $D(\cdot, \cdot)$ is a metric on the space of probability
  distributions.
\item $D(\cdot, \cdot)$ is upper bounded by $1$.
\item The variational distance of marginal distributions cannot be
  larger than that of the joint distributions: $D(P_X,Q_X)\leq
  D(P_{XY},Q_{XY})$ for any $P_{XY}$ and $Q_{XY}$.
\item It is convex: If $\{\alpha_i\}$ satisfy $\alpha_i\geq 0$ and
  $\sum_i\alpha_i=1$, and $\{P^i_X\}$ and $\{Q^i_X\}$ are sets of
  distributions over $X$, then $D(\sum_i\alpha_i P^i_X,
  \sum_i\alpha_i Q^i_X)\leq\sum_i \alpha_i D(P^i_X,Q^i_X)$.
\item For a joint distribution $P_{XY}$, the variational distribution
  of the marginal distributions is bounded by the probability that the
  RVs $X$ and $Y$ have different values: $D(P_X,P_Y)\leq P(X\neq Y)$.
\end{itemize}

The first four properties follow straightforwardly from the
definition.  The last is proved in the following.

\begin{lemma} \label{lem:Dbound} Let $X$ and $Y$ be two random
  variables jointly distributed according to $P_{XY}$. Then the
  variational distance between the marginal distributions $P_X$ and
  $P_Y$ is bounded by
  \begin{align*}
    D(P_X, P_Y) \leq P(X \neq Y) \, .
  \end{align*}
\end{lemma}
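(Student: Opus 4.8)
The plan is to prove the bound by a direct computation starting from the definition of the variational distance, exploiting the fact that the joint distribution $P_{XY}$ couples the two marginals on a common alphabet. If $X$ and $Y$ happen to have different ranges, I would first extend both to the union of their ranges, assigning probability zero where needed, so that $P_X(x)$ and $P_Y(x)$ are defined for the same values of $x$.

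First I would write each marginal in terms of the joint distribution, $P_X(x)=\sum_y P_{XY}(x,y)$ and $P_Y(x)=\sum_{x'}P_{XY}(x',x)$, and form their difference. The key observation is that the diagonal term $P_{XY}(x,x)$ occurs in both marginals and hence cancels, leaving
\[
  P_X(x)-P_Y(x)=\sum_{y\neq x}P_{XY}(x,y)-\sum_{x'\neq x}P_{XY}(x',x)\, .
\]
Applying the triangle inequality to the right-hand side then bounds $|P_X(x)-P_Y(x)|$ by the sum of the two off-diagonal expressions.

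Next I would sum over $x$ and recognise each of the resulting double sums as $P(X\neq Y)=\sum_{x\neq y}P_{XY}(x,y)$: the first is this quantity directly, and the second equals it after relabelling the summation indices (both simply sum $P_{XY}$ over all off-diagonal pairs). This yields $\sum_x|P_X(x)-P_Y(x)|\leq 2\,P(X\neq Y)$, and dividing by two gives the claim, since $D(P_X,P_Y)=\frac{1}{2}\sum_x|P_X(x)-P_Y(x)|$.

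I do not expect any serious obstacle here; the only point requiring a little care is the bookkeeping of the diagonal cancellation and the verification that both off-diagonal double sums are genuinely equal to $P(X\neq Y)$ rather than to some one-sided variant. A more conceptual alternative would be to invoke the coupling characterisation of total variation distance, namely that $D(P_X,P_Y)$ equals the minimum of $P(X\neq Y)$ over all couplings of $P_X$ and $P_Y$, and to note that $P_{XY}$ is one admissible coupling. However, since that characterisation would itself need justification, the elementary computation above is preferable for a self-contained argument.
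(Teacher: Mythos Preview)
Your proof is correct, but it proceeds differently from the paper's. The paper argues by conditioning: it decomposes $P_{XY}$ as $p_{\neq}\,P_{XY}^{\neq}+(1-p_{\neq})\,P_{XY}^{=}$, takes marginals, and then applies convexity of the variational distance together with $D(P_X^{=},P_Y^{=})=0$ and the trivial bound $D(P_X^{\neq},P_Y^{\neq})\leq 1$. Your argument instead works pointwise, cancelling the diagonal term $P_{XY}(x,x)$ in the difference $P_X(x)-P_Y(x)$ and bounding the absolute value by the sum of the two off-diagonal contributions. Your route is slightly more elementary, since it uses only the triangle inequality on real numbers rather than the convexity of $D(\cdot,\cdot)$ as a separate lemma; the paper's route is a bit more structural and makes the role of the event $\{X=Y\}$ explicit. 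Both are short and self-contained, and your closing remark about the coupling characterisation identifies the conceptual reason the inequality holds.
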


\begin{proof}
  Let $P_{XY}^{\neq}:=P_{XY|X\neq Y}$ be the joint distribution of
  $X$ and $Y$ conditioned on the event that they are not
  equal. Similarly, define $P_{XY}^= :=P_{XY|X=Y}$. We then
  have
  \begin{align*}
    P_{XY} = p_{\neq} P_{XY}^{\neq} + (1-p_{\neq}) P_{XY}^=
  \end{align*}
  where $p_{\neq} := P(X\neq Y)$.  By linearity, the marginals of these
  distributions satisfy the same relation, i.e.,
\begin{eqnarray*}
P_{X} &=& p_{\neq} P_{X}^{\neq} + (1-p_{\neq}) P_{X}^= \\ 
    P_{Y}&=&p_{\neq} P_{Y}^{\neq} + (1-p_{\neq}) P_{Y}^=\, .
\end{eqnarray*}
  Hence, by convexity of the variational distance,
  \begin{eqnarray*}
    D(P_X, P_Y) 
  &\leq& p_{\neq} D(P_X^{\neq}, P_Y^{\neq}) + (1-p_{\neq})  D(P_X^=,
  P_Y^=) \\
  &\leq& p_{\neq} \, ,
  \end{eqnarray*}  
  where the last inequality follows because the variational distance
  is at most 1, and $D(P_X^=, P_Y^=) = 0$.
\end{proof}

\end{document}